\newtheorem{Theorem}{Theorem}[section]
\newtheorem{lemma}[Theorem]{Lemma}
\newtheorem{proposition}[Theorem]{Proposition}
\theoremstyle{definition}
\newtheorem{definition}[Theorem]{Definition}
\newtheorem{example}[Theorem]{Example}
\newtheorem{hypothesis}[Theorem]{Hypothesis}
\theoremstyle{remark}
\numberwithin{equation}{section}
\numberwithin{figure}{section}
\numberwithin{table}{section}
\newcommand{\bi}{ \begin{itemize}  }
\newcommand{\ei}{\end{itemize}}
\newcommand{\bfor}{ \begin{eqnarray*} }
\newcommand{\efor}{\end{eqnarray*}}
\begin{document}

\title{Collaborative Insurance Sustainability\\and Network Structure\\}
\author{Arthur Charpentier$^{a*}$, Lariosse Kouakou$^{b}$, Matthias L\"owe$^{c}$, \\Philipp Ratz$^{a}$ \& Franck Vermet$^{b}$ \\\\
        \small $^{a}$ Université du Québec à Montréal (UQAM), Montréal (Québec), Canada \\
        \small $^{b}$ EURo Institut d'Actuariat (EURIA), Université de Brest, France\\
        \small $^{c}$ University of Münster, Germany \\\\
        \small $^{*}$Corresponding author \tt{charpentier.arthur@uqam.ca}}
\begin{titlingpage}
\maketitle

\begin{abstract}
The peer-to-peer (P2P) economy has been growing with the advent of the Internet, with well known brands such as {\em Uber}, {\em Airbnb} or {\em Prosper} being examples thereof. Whereas the P2P lending sector has already received considerable attention in both practice and literature, the insurance is still in its infancy. Companies such as {\em Lemonade} in the U.S. or {\em Inspeer} in France have recently started to explore P2P-based collaborative insurance products and the academic literature has also begun to study such products from a mathematical perspective. Previous research in P2P lending motivates us to consider P2P insurance not merely as a different form of traditional insurance, but instead as a complement thereof. Accordingly, we propose a novel toolkit to analyse P2P insurance products based on the explicit structure of (P2P) networks rather than individual characteristics. We study the general case of risk sharing via network connections using convex order and extend the framework to incorporate characteristics typically encountered in P2P settings via simulations. 
	\\[1mm]
	{\bf Keywords:} peer-to-peer insurance; convex order; network theory;.
		\\[1mm]
\noindent{\bf Acknowledgement}: The authors wish to thank Romuald \'Elie, Mathieu Laurière, Tran Viet Chi, Miguel Campista, Luis Costa, Matteo Sammarco, Thomas Richard and Harpreet Kang as well as participants of the IME 2021 Conference, the Quantact ``summer seminar", the ASTIN 2021 Colloquium and the attendees of the Annual Meeting of the Statistical Society of Canada (SSC) 2022 for discussions and feedbacks. Arthur Charpentier received financial support from the Natural Sciences and Engineering Research Council of Canada (NSERC-2019-07077) and the AXA Research Fund.
Matthias L\"owe's research was funded by the Deutsche Forschungsgemeinschaft (DFG, German Research Foundation) under Germany's Excellence Strategy EXC 2044-390685587, Mathematics M{\"u}nster: Dynamics - Geometry - Structure.
\end{abstract}
\end{titlingpage}

% \tableofcontents
% \newpage

%%%%%%%%%%%%%%%%%%%%%%%%%%%%%%%%%%%%%%%%%%%%%%%%%%%%%%%%%%%%%%%%%%%%%%%%
\section{Introduction}

% {\color{red}{Main changes:
% \begin{enumerate}
%     \item Made extensive links to P2P lending, as this is already well studied in the OR literature. This gives some justification to the homophily property that we use to justify i.i.d. losses. Additionally, it allows us to present our framework as a "fusion" of what was done by eg. \cite{abdikerimova2022peer} and what was already researched in the P2P lending literature, so we do not have to reinvent the wheel. 
%     \item Removed deductible part from linear risk sharing, this way we can ensure that the results derived using the convex order hold - instead it is moved down to the simulations. This had the benefit that we can present the simulations as a natural extension of the first (analytical) part, but with a focus on the variance. Further, it would allow to find an optimum if the we have a quadratic utility function of the agents. Maybe we can extend this to show that if we use neither linear nor quadratic utilities the optimization problem becomes too complicated (eg. the linear program for individual utilities will become NP-complete)
%     \item Changed focus of friends-of-friends and made link to P2P lending. 
%     \item Moved story on cliques and a few examples to the appendix (will be online) - otherwise we will exceed the maximum pagecount
%     \item Still a todo: Need to have a short discussion about "nonlinear" risk sharing, is it really non-linear?
% \end{enumerate}

% ToDo:
% \begin{itemize}
%     \item Write appendix
% \end{itemize}
% }}

The peer-to-peer (P2P) economy has been steadily growing with the advent of the internet, {\em Uber} (ride services), {\em Airbnb} (lodging) or {\em Prosper} (lending) being prominent examples thereof. At its core, it establishes contracts between individuals for services that are either not available or otherwise deficient when acquired from traditional providers (taxi companies, hotels, and traditional credit institutions in the three cases above). A growing body of research has been studying the different business models and reasons why the P2P economy performs better in some tasks than the incumbent firms. In the literature concerning the financial sector, P2P {\em lending} has received considerable attention and its functional models have been studied extensively. Contrary to P2P lending, P2P {\em insurance} has only recently started to gain attention from an academic perspective, with \cite{guo2016instance}, \cite{hong2019optimal}, \cite{wang2022credit} \cite{abdikerimova2022peer} or \cite{Robert2} examples hereof. This article extends the research in the field by taking into account the specific nature of P2P financial products, for example the smaller scales that are typically encountered and analyses the consequences of these settings in detail. Instead of focusing on large-scale pooling that is at the core of traditional insurance, we focus on the opposite case, where smaller groups of peers work together to provide mutual insurance, similar to what companies such as {\em Friendsurance} propose. In that sense, our analysis is similar to situations that were already studied in the context of P2P lending and we are able to draw from the insights of previous research. Furthermore, instead of focusing on a fixed pool of individuals, we consider a framework that is based explicitly on network ties, in the spirit of previous P2P research.

In what follows we will first focus on the similarities between P2P lending and P2P insurance, which can be summarised in three points as described below. (1) Both products contain a partly unforeseeable risk which may result in a loss. For lending this corresponds to a default for insurance to a loss in the actuarial sense. (2) Trust between the parties is of essence, in lending this relates to securing funding (see for example \cite{thakor2018trust})
in insurance this boils down to having a counterparty that acts in good faith (the nature of this will be discussed further below). (3) Adverse selection may be present in both markets under traditional settings.

Similarity (1) is inherent in most financial applications, and typically interest or a premium is charged to compensate the issuer of a security or an insurance contract for the risk. It is common practice for traditional creditors and insurers to screen the applicants before entering a contract, either with some form of a credit score or a risk model based on available "hard" information, such as socio-demographics and credit history. The issue here is that the information is inherently incomplete which then often results in problems with similarity (3), as only an individual knows its true risk and will try to optimize accordingly. In P2P settings this information is often not available but instead "soft" information, that is found in social behaviour or information, is used by the peers to assess a given risk. \cite{iyer2016screening} find that risk evaluations using such indicators can even outperform predictions done by the exact credit score. Within a given social network, as \cite{granovetter2005impact} argues, social ties affect the flow of information between connected agents (which we will refer to as "friends"), in such a way that the soft information described above is easier to transmit between them. Beyond the fact that friends usually have access to better soft information, \cite{lin2013judging} find that online friendships also act as a signal of credit quality. This can be partly explained by the property of network homophily (cf. \cite{McPherson_Smith-Lovin_Cook_2001}) which states that within a network, similar agents are more likely attached to one-another than dissimilar ones. If agents are more similar, this is also an indicator that they possess similar risk profiles.

Similarity (2) actually has two dimensions in the case of insurance. From the insurers side, this means that individuals might engage in risky behaviour once they are covered but also entails moral hazard by for example including invalid claims in bad faith, as discussed in \cite{tennyson2008moral}. %{\color{red} ToDo: find some literature}. 
From the policyholders side the trust issue condenses to believing that the insurer actually reimburses claims. As \cite{abdikerimova2022peer} write based on  \cite{carlin2009strategic}, insurers have an incentive to increase their profit by introducing exemptions that can be complicated for the policyholder to understand. If there is a social relationship between both the insurer and the policyholder, for example if they are mutually insured, we can employ the concept of social collateral. The profit of gaining through the refusal of a valid claim then receives a negative effect on the social plane. \cite{liu2020social} use the concept to show that the danger of losing friendships or other social ties can also be effective for financial aspects, for example it can be used to guarantee the payback of a loan similar to traditional collateral. \cite{karlan2009trust} argue that networks generate various degree of trust between the agents, a concept which we will also exploit later on.  In the case of insurance, social collateral can apply to both directions of similarity (2). Further, moral hazard might be actively reduced in such setting, as for example \cite{biener2018can} and \cite{benjaafar2019peer} find that pro-social preferences that are often present in social networks can alleviate such issues.

Finally, similarity (3) is often an issue affiliated with information asymmetry. Whereas collateral is often demanded in the lending case, deductibles are frequently employed in the insurance industry. In both cases this can lead to a sub-optimal outcome, we will discuss the implications of this later on. Note though that in traditional setting, information asymmetry arises because the risk can only be approximated by an institution, hence the arguments from similarity (1) should also apply to this point. Although much common ground exists between P2P lending and insurance, there are also points that differentiate them. Most notably, insurance is based on the concept that ``{\em the contribution of the many to the misfortune of the few}'' as written in \cite{charpentierdenuit}. Whereas lending usually has an aspect that is similar to an investment, insurance can be mutually beneficial even in a zero sum game, as mutualization reduces the variance of risks which benefits risk-averse agents. Further, insurance often possesses a social component, that is, agents are willing to pay more than their (actuarially) fair share for some common social objective. That is a feature in many social insurance systems that go beyond the pure expectation of costs, as an example one might consider the Swiss health insurance system \cite{schindler2018solidarity}.

We conclude the introduction by analysing the existing research in P2P insurance and point out how our approach complements the studies in the field. Recently \cite{abdikerimova2022peer} investigated mutual aid, a P2P-insurance based approach popular in China. They formalize and propose different risk-sharing mechanisms and derive analytical solutions thereof. \cite{feng_2020_flood} study decentralized insurance schemes and propose an allocation mechanism that is Pareto optimal. Our approach here differs from these two recent articles in the sense that we consider heterogeneous risks stemming from a network of peers rather than the peers themselves. Our analysis focuses on arbitrary networks and nests the types considered in the articles above as special cases (more on that later). Another direction of research can be found for example in \cite{Robert} and \cite{Robert2}, who mainly study asymptotic behaviour of P2P insurance schemes when they grow large. Again, our approach complements this research as we follow the findings of P2P lending and will assume that the insurance contribution of peers are typically small and stem from a small group of peers. In short, we propose a mechanism that allows to compare different P2P insurance schemes based on arbitrary (social) networks and analyse the conditions for a sustainable system with small pools rather than arbitrarily large ones. In that we also follow the P2P lending literature such as \cite{tang2019peer} who argue P2P solutions are a complement rather than a substitute to traditional solutions. Examples of risk sharing in industrial settings can for example be found in \cite{buzacott2012contract}. The remainder of the article is structured as follows: After briefly recalling the most important aspects of graph theory that will be used in this article in Section \ref{sec:intrographs}, we propose a toolset to analyse and order different network-based P2P insurance schemes based on their convex order in Section \ref{sec:linearrisksharing}. We then extend the findings to nonlinear risk sharing with the use of simulations in \ref{sec:nonlinear}, this in turn allows us to incorporate the results into a utility framework often used in insurance. We discuss some brief extensions and conclude with Section \ref{sec:conclusion}.

\section{Notations from Graph Theory}\label{sec:intrographs}

As this article will deal with arbitrary networks, we need to generalize a single network into a {\em graph}. This section recalls the most important notations from graph theory that will be used throughout the subsequent sections. A graph $\mathcal{G}$ on $\{1,2,\dots,n\}$ is a pair $(\mathcal{V}, \mathcal{E})$, where $\mathcal{V}=\{1,2,\dots,n\}$ is the set of vertices (also called nodes) that correspond to the insurees and $\mathcal{E}$ is a set of edges (also called links), which are pairs of vertices, e.g.\ $\{i,j\}$, $i,j\in \mathcal{V}$. Here we will consider undirected networks, where an edge $\{i,j\}$
means that vertices $i$ and $j$ have a \emph{reciprocal} connection. To make the analysis tractable, we assume that the network is static for the observed time period such that neither nodes nor edges can added.

The graph can then be represented by an $n\times n$ matrix $A$, called adjacency matrix, where $A_{i,j}\in\{0,1\}$ with $A_{i,j}=1$ if and only if $\{i,j\}\in \mathcal E$. Let $A_{i\cdot}$ (and $A_{\cdot i}$) denote the vector in $\{0,1\}^n$ that corresponds to row $i$ (and column $i$) of matrix $A$. As we will show, for the purpose of P2P insurance the crucial component within the network structure is the node \emph{degree} which corresponds to the cardinality of the set of nodes connected to it. Given a network $\mathcal{G}=(\mathcal{V}, \mathcal{E})$ and the associated adjacency matrix $A$, the degree of a node $i$ is
$$
d_i=\displaystyle{A_{i\cdot}^\top \boldsymbol{1}=\sum_{j=1}^n A_{i,j}}
$$
For convenience, let $\mathcal{V}_i$ denote connections of node $i$, i.e. $\mathcal{V}_i=\lbrace j\in\mathcal{V}:\{i,j\}\in\mathcal{E}\rbrace = \lbrace j \in V:  A_{i,j}=1\rbrace$. The \emph{degree vector} $\boldsymbol{d}$ summarizes the degree information across the network such that $\boldsymbol{d}=[|\mathcal{V}_1|,\dots, |\mathcal{V}_n|]$. Our focus in this article will be random networks that are specified by the expected value of the degree vector $\mathbb{E}[\boldsymbol{d}]$ (corresponding to the average, $\displaystyle{\overline{d}=\boldsymbol{d}^\top \boldsymbol{1}/n}$) and its variance $\text{Var}[\boldsymbol{d}]$ (corresponding to $\displaystyle{(\boldsymbol{d}-\mathbb{E}[\boldsymbol{d}]\boldsymbol{1})^\top \boldsymbol{1}/n}$))\footnote{Note focusing on the first two moments makes the analysis merely more tractable for the cases we consider here, extensions could also include networks that are specified otherwise.}. The precise mechanism that is used to generate the networks will be discussed in Section \ref{sec:nonlinear}.

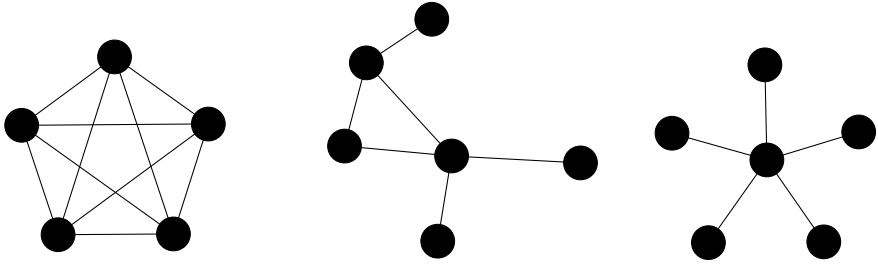
\begin{figure}
    \centering
    \begin{tikzpicture}[x=0.75pt,y=0.75pt,yscale=-1,xscale=1]
%uncomment if require: \path (0,300); %set diagram left start at 0, and has height of 300

%Shape: Circle [id:dp686341804548396] 
\draw  [fill={rgb, 255:red, 0; green, 0; blue, 0 }  ,fill opacity=1 ] (47.82,69.52) .. controls (47.82,64.83) and (51.63,61.02) .. (56.32,61.02) .. controls (61.01,61.02) and (64.82,64.83) .. (64.82,69.52) .. controls (64.82,74.22) and (61.01,78.02) .. (56.32,78.02) .. controls (51.63,78.02) and (47.82,74.22) .. (47.82,69.52) -- cycle ;
%Shape: Circle [id:dp6079306272501404] 
\draw  [fill={rgb, 255:red, 0; green, 0; blue, 0 }  ,fill opacity=1 ] (66.18,124.74) .. controls (66.18,120.05) and (69.98,116.24) .. (74.68,116.24) .. controls (79.37,116.24) and (83.18,120.05) .. (83.18,124.74) .. controls (83.18,129.44) and (79.37,133.24) .. (74.68,133.24) .. controls (69.98,133.24) and (66.18,129.44) .. (66.18,124.74) -- cycle ;
%Shape: Circle [id:dp6143383619982439] 
\draw  [fill={rgb, 255:red, 0; green, 0; blue, 0 }  ,fill opacity=1 ] (141.97,68.88) .. controls (141.97,64.19) and (145.78,60.38) .. (150.47,60.38) .. controls (155.17,60.38) and (158.97,64.19) .. (158.97,68.88) .. controls (158.97,73.58) and (155.17,77.38) .. (150.47,77.38) .. controls (145.78,77.38) and (141.97,73.58) .. (141.97,68.88) -- cycle ;
%Shape: Circle [id:dp5096217233190674] 
\draw  [fill={rgb, 255:red, 0; green, 0; blue, 0 }  ,fill opacity=1 ] (124.37,124.35) .. controls (124.37,119.65) and (128.17,115.85) .. (132.87,115.85) .. controls (137.56,115.85) and (141.37,119.65) .. (141.37,124.35) .. controls (141.37,129.04) and (137.56,132.85) .. (132.87,132.85) .. controls (128.17,132.85) and (124.37,129.04) .. (124.37,124.35) -- cycle ;
%Shape: Circle [id:dp7006607886729801] 
\draw  [fill={rgb, 255:red, 0; green, 0; blue, 0 }  ,fill opacity=1 ] (94.66,35) .. controls (94.66,30.31) and (98.47,26.5) .. (103.16,26.5) .. controls (107.86,26.5) and (111.66,30.31) .. (111.66,35) .. controls (111.66,39.7) and (107.86,43.5) .. (103.16,43.5) .. controls (98.47,43.5) and (94.66,39.7) .. (94.66,35) -- cycle ;
%Shape: Regular Polygon [id:dp33573225665778983] 
\draw   (150.47,68.88) -- (132.87,124.35) -- (74.68,124.74) -- (56.32,69.52) -- (103.16,35) -- cycle ;
%Straight Lines [id:da05181262167570133] 
\draw    (103.16,35) -- (74.68,124.74) ;
%Straight Lines [id:da5560787616207097] 
\draw    (103.16,35) -- (132.87,124.35) ;
%Straight Lines [id:da2676383585871247] 
\draw    (141.97,68.88) -- (56.32,69.52) ;
%Straight Lines [id:da9884846041392155] 
\draw    (150.47,68.88) -- (74.68,124.74) ;
%Straight Lines [id:da9560553468111574] 
\draw    (56.32,69.52) -- (132.87,124.35) ;
%Shape: Circle [id:dp10935199882096303] 
\draw  [fill={rgb, 255:red, 0; green, 0; blue, 0 }  ,fill opacity=1 ] (221.66,38) .. controls (221.66,33.31) and (225.47,29.5) .. (230.16,29.5) .. controls (234.86,29.5) and (238.66,33.31) .. (238.66,38) .. controls (238.66,42.7) and (234.86,46.5) .. (230.16,46.5) .. controls (225.47,46.5) and (221.66,42.7) .. (221.66,38) -- cycle ;
%Shape: Circle [id:dp6326124310928694] 
\draw  [fill={rgb, 255:red, 0; green, 0; blue, 0 }  ,fill opacity=1 ] (210.66,80) .. controls (210.66,75.31) and (214.47,71.5) .. (219.16,71.5) .. controls (223.86,71.5) and (227.66,75.31) .. (227.66,80) .. controls (227.66,84.7) and (223.86,88.5) .. (219.16,88.5) .. controls (214.47,88.5) and (210.66,84.7) .. (210.66,80) -- cycle ;
%Shape: Circle [id:dp46718043056965763] 
\draw  [fill={rgb, 255:red, 0; green, 0; blue, 0 }  ,fill opacity=1 ] (264.66,85) .. controls (264.66,80.31) and (268.47,76.5) .. (273.16,76.5) .. controls (277.86,76.5) and (281.66,80.31) .. (281.66,85) .. controls (281.66,89.7) and (277.86,93.5) .. (273.16,93.5) .. controls (268.47,93.5) and (264.66,89.7) .. (264.66,85) -- cycle ;
%Shape: Circle [id:dp7142856446888453] 
\draw  [fill={rgb, 255:red, 0; green, 0; blue, 0 }  ,fill opacity=1 ] (257.66,128) .. controls (257.66,123.31) and (261.47,119.5) .. (266.16,119.5) .. controls (270.86,119.5) and (274.66,123.31) .. (274.66,128) .. controls (274.66,132.7) and (270.86,136.5) .. (266.16,136.5) .. controls (261.47,136.5) and (257.66,132.7) .. (257.66,128) -- cycle ;
%Shape: Circle [id:dp9089248763228164] 
\draw  [fill={rgb, 255:red, 0; green, 0; blue, 0 }  ,fill opacity=1 ] (329.66,88.5) .. controls (329.66,83.81) and (333.47,80) .. (338.16,80) .. controls (342.86,80) and (346.66,83.81) .. (346.66,88.5) .. controls (346.66,93.2) and (342.86,97) .. (338.16,97) .. controls (333.47,97) and (329.66,93.2) .. (329.66,88.5) -- cycle ;
%Straight Lines [id:da34359798568781175] 
\draw    (230.16,38) -- (219.16,80) ;
%Straight Lines [id:da14799918499791276] 
\draw    (219.16,80) -- (273.16,85) ;
%Straight Lines [id:da4950805770465967] 
\draw    (273.16,85) -- (266.16,128) ;
%Straight Lines [id:da8952890992275409] 
\draw    (273.16,85) -- (338.16,88.5) ;
%Straight Lines [id:da3751766733952602] 
\draw    (230.16,38) -- (273.16,85) ;
%Shape: Circle [id:dp720767181943264] 
\draw  [fill={rgb, 255:red, 0; green, 0; blue, 0 }  ,fill opacity=1 ] (375.82,73.52) .. controls (375.82,68.83) and (379.63,65.02) .. (384.32,65.02) .. controls (389.01,65.02) and (392.82,68.83) .. (392.82,73.52) .. controls (392.82,78.22) and (389.01,82.02) .. (384.32,82.02) .. controls (379.63,82.02) and (375.82,78.22) .. (375.82,73.52) -- cycle ;
%Shape: Circle [id:dp2796594989176533] 
\draw  [fill={rgb, 255:red, 0; green, 0; blue, 0 }  ,fill opacity=1 ] (394.18,128.74) .. controls (394.18,124.05) and (397.98,120.24) .. (402.68,120.24) .. controls (407.37,120.24) and (411.18,124.05) .. (411.18,128.74) .. controls (411.18,133.44) and (407.37,137.24) .. (402.68,137.24) .. controls (397.98,137.24) and (394.18,133.44) .. (394.18,128.74) -- cycle ;
%Shape: Circle [id:dp6003846408007134] 
\draw  [fill={rgb, 255:red, 0; green, 0; blue, 0 }  ,fill opacity=1 ] (469.97,72.88) .. controls (469.97,68.19) and (473.78,64.38) .. (478.47,64.38) .. controls (483.17,64.38) and (486.97,68.19) .. (486.97,72.88) .. controls (486.97,77.58) and (483.17,81.38) .. (478.47,81.38) .. controls (473.78,81.38) and (469.97,77.58) .. (469.97,72.88) -- cycle ;
%Shape: Circle [id:dp2199829697887017] 
\draw  [fill={rgb, 255:red, 0; green, 0; blue, 0 }  ,fill opacity=1 ] (452.37,128.35) .. controls (452.37,123.65) and (456.17,119.85) .. (460.87,119.85) .. controls (465.56,119.85) and (469.37,123.65) .. (469.37,128.35) .. controls (469.37,133.04) and (465.56,136.85) .. (460.87,136.85) .. controls (456.17,136.85) and (452.37,133.04) .. (452.37,128.35) -- cycle ;
%Shape: Circle [id:dp7258659713844067] 
\draw  [fill={rgb, 255:red, 0; green, 0; blue, 0 }  ,fill opacity=1 ] (422.66,39) .. controls (422.66,34.31) and (426.47,30.5) .. (431.16,30.5) .. controls (435.86,30.5) and (439.66,34.31) .. (439.66,39) .. controls (439.66,43.7) and (435.86,47.5) .. (431.16,47.5) .. controls (426.47,47.5) and (422.66,43.7) .. (422.66,39) -- cycle ;
%Shape: Circle [id:dp511291770101842] 
\draw  [fill={rgb, 255:red, 0; green, 0; blue, 0 }  ,fill opacity=1 ] (423.66,87) .. controls (423.66,82.31) and (427.47,78.5) .. (432.16,78.5) .. controls (436.86,78.5) and (440.66,82.31) .. (440.66,87) .. controls (440.66,91.7) and (436.86,95.5) .. (432.16,95.5) .. controls (427.47,95.5) and (423.66,91.7) .. (423.66,87) -- cycle ;
%Straight Lines [id:da01770945157188808] 
\draw    (431.16,39) -- (432.16,87) ;
%Straight Lines [id:da6492816605885285] 
\draw    (478.47,72.88) -- (432.16,87) ;
%Straight Lines [id:da8584957572288461] 
\draw    (460.87,128.35) -- (432.16,87) ;
%Straight Lines [id:da6192340494828981] 
\draw    (402.68,128.74) -- (432.16,87) ;
%Straight Lines [id:da439929053072994] 
\draw    (384.32,73.52) -- (432.16,87) ;
%Shape: Circle [id:dp8092175280706155] 
\draw  [fill={rgb, 255:red, 0; green, 0; blue, 0 }  ,fill opacity=1 ] (254.66,16) .. controls (254.66,11.31) and (258.47,7.5) .. (263.16,7.5) .. controls (267.86,7.5) and (271.66,11.31) .. (271.66,16) .. controls (271.66,20.7) and (267.86,24.5) .. (263.16,24.5) .. controls (258.47,24.5) and (254.66,20.7) .. (254.66,16) -- cycle ;
%Straight Lines [id:da34691719296263224] 
\draw    (263.16,16) -- (230.16,38) ;

\end{tikzpicture}
    \caption{Left, a complete and therefore regular graph where $\text{Var}[d] = 0$, Center an example of a general graph with $\text{Var}[d] \neq 0$ and right, the centralized model. The illustrations on the flanks are as described in \cite{abdikerimova2022peer}, the general case in the center will be discussed in the article.}
    \label{tikz:graphrep}
\end{figure}

{%\color{red}
The situation when the individual degrees are identically distributed and $\text{Var}[\boldsymbol{d}]=0$ merits special consideration. In this case we are dealing with a \emph{regular graph}, that is a network where every node $i$ has exactly the same amount of nodes adjacent to it. If in addition $\boldsymbol{d}=(n-1)$ for all $i$, the resulting graph will be \emph{complete}, in this case every node within $\mathcal{G}$ is connected to one another. The traditional centralized insurance model can also be considered a special kind of network. In this case, the insurer acts as pooling instance that helps to connect the nodes indirectly and allows them to share risk among each other for the benefit of all. Such a collective transfer is possible because of the central limit theorem that guarantees that the relative variability of individual contracts decreases with the number of insured in the pool. The P2P alternative to traditional insurance often works in a similar way, instead of having an insurer in the center the often work with a common fund that acts as a pooling mechanism. The articles of \cite{abdikerimova2022peer} and \cite{feng_2020_flood} consider special cases thereof, where either the entire graph is complete or the graph can be broken down into large cliques (complete subgraphs). In this article however, this will generally not be the case. Figure \ref{tikz:graphrep} illustrates the different approaches, we will consider general graphs like the center image, which nests the two graphs on the side as special cases\footnote{An example that combines their approach and ours can be found in the online supplementary material}.}

\section{Linear Risk Sharing and Convex Order}\label{sec:linearrisksharing}

In order to obtain some general results on the sharing schemes and the impact of the shape of the network, it is necessary to make some assumptions which we discuss here. Consider $n$ nodes, indexed with $i=1,\dots,n$. Each node corresponds to a policyholder, facing a random yearly loss, that might occur with probability $p$, and cost $Y_i$, that has distribution $F$. For convenience, let $Z_i$ denote the indicator for the event that $i$ claims a loss, as it is common in the individual model in actuarial science. $Z_i$ is a Bernoulli variable with parameter $p$. Thus, we consider the simple case where only one loss might occur. The individual risk is then defined as $X_i = Z_i\cdot Y_i$. For the variable $Z_i$ we work with the following hypothesis:

\begin{hypothesis}[Loss frequency]\label{hyp90}
Policyholders can face only one loss, or none, over a year.
\end{hypothesis}

We mainly assume Hypothesis \ref{hyp90} to facilitate the model and to describe the mechanism. It is also possible to assume that many claims might occur, and that the total sum of contributions is bounded (by some value $\gamma$). This would not substantially change the analysis, but will make notations more cumbersome. Based on the concept of network homophily described above, we also introduce the following hypothesis:

\begin{hypothesis}[Risk distribution]\label{hyp9}
All risks $X_i$'s are assumed to be independent and identically distributed random variables.
\end{hypothesis}

Hypothesis \ref{hyp9} is important to have a simple reciprocal and fair mechanism. If two policyholders $i$ and $j$ do not have the same probability to claim a loss (distribution of $Z$), or the same distribution of individual claims (distribution of $Y$), it becomes more difficult to derive a fair contribution for some risk exchange (this was discussed recently in \cite{Robert}). The hypothesis can then be relaxed in Section \ref{sec:nonlinear}. In what follows, we first introduce the concept of convex order to compare different insurance mechanisms in general and then propose linear risk sharing for the P2P problem at hand specifically.

\subsection{Risk preferences and ordering of risk}

\cite{Ohlin1969}, inspired by \cite{Karlin1963}, suggested to use the convex order to solve the optimal insurance decision problem, as discussed in \cite{denuit2006actuarial}, \cite{egozcue2010gains} or \cite{CHEUNG2015409}. \cite{grechuk2015center} used it in the context of capital allocation (which satisfies properties close to risk sharing ones). Given a random loss $X$, and an insurance scheme offering indemnity $x\mapsto I(x)$ against a premium $p$, an agent will purchase the insurance if $\xi\preceq_{CX}X$ for the convex order, where $\xi=X+\pi-I(X)$ (where $\pi$ is the associated premium to transfer that risk). This formalism is a natural translation of the expected utility model in insurance, where an agent having wealth $w$ and (concave) utility $u$ agrees to purchase a contract offering utility $I$ against a premium $\pi$ if $\mathbb{E}[u(w-X)]\leq \mathbb{E}[u(w+I(X)-\pi-X)]=\mathbb{E}[u(w-\xi)]$. This framework will also allow us to study the desirability of different P2P insurance mechanisms. Here we follow \cite{shaked2007stochastic} or \cite{denuit2006actuarial}, and define the convex order as follows:

\begin{definition}[Convex order]
Consider two variables $X$ and $Y$ such that $$\mathbb{E}[g(X)]\leq \mathbb{E}[g(Y)]\text{ for all convex functions }g:\mathbb{R}\rightarrow \mathbb{R},$$
(provided the expectations exist). Then $X$ is said to be smaller than $Y$ in the convex order, and is denoted as $X\preceq_{CX}Y$.
\end{definition}\label{def:convexorder}
As we will see below,
the inequality $X\preceq_{CX}Y$ intuitively means that $X$ and $Y$ have the same magnitude, as $\mathbb{E}[X] =\mathbb{E}[Y]$, but that $Y$ is more variable than $X$. 
More specifically:

\begin{proposition}\label{prop-mean-var}
If $X\preceq_{CX}Y$, then $\mathbb{E}[X] =\mathbb{E}[Y]$ and $\text{Var}[X] \leq\text{Var}[Y]$.
\end{proposition}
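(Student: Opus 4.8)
The plan is to apply the definition of convex order to carefully chosen convex functions. The definition says $\mathbb{E}[g(X)] \le \mathbb{E}[g(Y)]$ for every convex $g : \mathbb{R} \to \mathbb{R}$. The first observation is that both $g(t) = t$ and $g(t) = -t$ are convex. Applying the inequality to $g(t) = t$ gives $\mathbb{E}[X] \le \mathbb{E}[Y]$, and applying it to $g(t) = -t$ gives $-\mathbb{E}[X] \le -\mathbb{E}[Y]$, i.e. $\mathbb{E}[X] \ge \mathbb{E}[Y]$. Together these force $\mathbb{E}[X] = \mathbb{E}[Y]$; call this common value $\mu$.

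Next I would exploit the quadratic function, which is the natural convex choice for controlling the variance. The map $g(t) = t^2$ is convex, so $\mathbb{E}[X^2] \le \mathbb{E}[Y^2]$. Combining this with the equality of means just established,
\[
\mathrm{Var}[X] = \mathbb{E}[X^2] - \mu^2 \le \mathbb{E}[Y^2] - \mu^2 = \mathrm{Var}[Y],
\]
which is exactly the claimed inequality. (Equivalently, one can apply the convex order directly to $g(t) = (t-\mu)^2$, which is convex, to get $\mathrm{Var}[X] = \mathbb{E}[(X-\mu)^2] \le \mathbb{E}[(Y-\mu)^2] = \mathrm{Var}[Y]$ in one line; this avoids even mentioning second moments separately.)

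The only subtlety worth flagging is the integrability caveat built into the definition ("provided the expectations exist"). To make the argument fully rigorous one should note that convex order between $X$ and $Y$ is normally stated for integrable random variables, so $\mathbb{E}[X]$ and $\mathbb{E}[Y]$ are finite; the variance inequality is then read in the extended sense, meaning that if $\mathrm{Var}[Y] = +\infty$ there is nothing to prove, and if $\mathrm{Var}[Y] < \infty$ then $\mathbb{E}[Y^2] < \infty$ and the inequality $\mathbb{E}[X^2] \le \mathbb{E}[Y^2]$ forces $\mathrm{Var}[X] < \infty$ as well. This is really the "main obstacle," and it is a very mild one — there is no hard analysis here, just a matter of stating the finiteness conventions cleanly. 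The bulk of the proof is the two-line application of linearity ($g(t) = \pm t$) followed by the quadratic test function.

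I would present it in that order: first the mean equality via $\pm t$, then the variance inequality via $(t - \mu)^2$, with a one-sentence remark handling the degenerate infinite-variance case.
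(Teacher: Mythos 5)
Your proof is correct and follows exactly the paper's route: the paper's own proof applies the convex-order inequality to $g(x)=\pm x$ for the equality of means and to $g(x)=x^2$ for the variance inequality, citing Equations (3.A.2) and (3.A.4) of Shaked and Shanthikumar. Your version merely spells out the same two-line argument in full and adds a sensible remark on the integrability convention.
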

\begin{proof}
For the expected value, consider $g(x)=\pm x$ (Equation (3.A.2) in \cite{shaked2007stochastic}) and for the variance $g(x)=x^2$ (Equation (3.A.4) in \cite{shaked2007stochastic}).
\end{proof}

If $X\preceq_{CX}Y$, we can also say that $Y$ is a {\em mean preserving spread} of $X$, in the sense that it satisfies a martingale property, that can be written $Y\overset{\mathcal{L}}{=}X+Z$ where $Z$ is such that $\mathbb{E}[Z|X]=0$, as discussed in \cite{CARLIER2012207}.
An example of variables ordered via the convex order is given in the following lemma:

\begin{lemma}\label{lemma-prob-cx}
Let $\boldsymbol{X}=(X_1,\cdots,X_n)$ denote a collection of i.i.d. variables, and $\boldsymbol{p}$ some $n$-dimensional probability vector. Then $\boldsymbol{p}^\top\boldsymbol{X}\preceq_{CX}X_i$ for any $i$.
\end{lemma}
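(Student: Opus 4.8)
The plan is to exploit the i.i.d. structure together with the characterization of convex order via conditional expectations (the martingale / mean-preserving-spread property mentioned just before the lemma). First I would write $W = \boldsymbol{p}^\top\boldsymbol{X} = \sum_{j=1}^n p_j X_j$, and note that since $\boldsymbol{p}$ is a probability vector, $\mathbb{E}[W] = \sum_j p_j \mathbb{E}[X_j] = \mathbb{E}[X_i]$, so the means already agree, which is consistent with Proposition \ref{prop-mean-var}. The substance is to show $W \preceq_{CX} X_i$, i.e. $\mathbb{E}[g(W)] \le \mathbb{E}[g(X_i)]$ for every convex $g$.

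The cleanest route is Jensen's inequality for conditional expectations. Fix $i$ and observe that, because the $X_j$ are i.i.d., for each fixed $j$ we have $\mathbb{E}[X_i \mid X_j] $ has the same distribution regardless of which index $j$ we condition on; more to the point, I want to realize $W$ as a conditional expectation of $X_i$ given some sub-$\sigma$-algebra. The trick: introduce an auxiliary random index $J$, independent of $\boldsymbol{X}$, with $\mathbb{P}(J=j) = p_j$, and set $\widetilde{X} = X_J$. Then $\widetilde{X} \overset{\mathcal{L}}{=} X_i$ (a mixture of identical laws is that same law), and $\mathbb{E}[\widetilde{X} \mid \boldsymbol{X}] = \sum_j p_j X_j = W$. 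Hence by the conditional Jensen inequality, for any convex $g$,
$$
\mathbb{E}[g(W)] = \mathbb{E}\big[g(\mathbb{E}[\widetilde{X}\mid \boldsymbol{X}])\big] \le \mathbb{E}\big[\mathbb{E}[g(\widetilde{X})\mid \boldsymbol{X}]\big] = \mathbb{E}[g(\widetilde{X})] = \mathbb{E}[g(X_i)],
$$
which is exactly $W \preceq_{CX} X_i$.

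The main obstacle — really the only subtlety — is making the identity $\widetilde{X}\overset{\mathcal{L}}{=}X_i$ and the conditioning step rigorous: one must be careful that $J$ is drawn independently of $\boldsymbol{X}$, so that $\mathbb{E}[g(X_J)\mid \boldsymbol{X}] = \sum_j p_j g(X_j)$ holds pointwise, and that $X_J$ has the common marginal law $F$ (this uses Hypothesis \ref{hyp9}, i.i.d., in an essential way — if the $X_j$ were merely identically distributed but dependent this construction would still give the law of $X_J$ correctly, but the i.i.d. assumption is what the lemma is stated under and keeps everything transparent). One should also note the integrability proviso ``provided the expectations exist'' from the definition: if $\mathbb{E}[g(X_i)] = +\infty$ the inequality is vacuous, and otherwise $\mathbb{E}[|X_i|]<\infty$ suffices to justify all the conditional expectations above. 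An alternative, more hands-on proof would bypass the auxiliary index entirely: by convexity of $g$ and $\sum_j p_j = 1$, $g(W) = g(\sum_j p_j X_j) \le \sum_j p_j g(X_j)$ pointwise, and then taking expectations gives $\mathbb{E}[g(W)] \le \sum_j p_j \mathbb{E}[g(X_j)] = \mathbb{E}[g(X_i)]$ since $\mathbb{E}[g(X_j)]$ is the same for all $j$. This second argument is shorter and uses only finite Jensen plus identical marginals, so I would likely present that as the main proof and mention the conditional-expectation viewpoint as the reason it fits the ``mean-preserving spread'' picture.
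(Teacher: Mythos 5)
Your proof is correct, and both of your arguments (the pointwise finite Jensen inequality and the conditional-Jensen construction with the auxiliary index $J$) are sound; the integrability caveats you raise are handled exactly as you say. However, your route is genuinely different from the paper's. The paper does not use Jensen at all: it writes $X_i=\boldsymbol{p}^\top\boldsymbol{X}_i^+$ where $\boldsymbol{X}_i^+=X_i\boldsymbol{1}$ is the comonotonic version of $\boldsymbol{X}$, and then invokes the comonotonic upper-bound theorem (Theorem 6 of Kaas et al.\ 2002, or Proposition 3.4.29 in the stochastic-orders monograph) to get $\boldsymbol{p}^\top\boldsymbol{X}\preceq_{CX}\boldsymbol{p}^\top\boldsymbol{X}_i^+$; it also notes the alternative of using Property 3.4.48 together with the majorization $\boldsymbol{p}\prec\boldsymbol{e}_i$. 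What the paper's approach buys is continuity with the rest of Section 3: the same majorization/comonotonicity machinery (Property 3.4.48, matrix majorization) reappears in the proof of the ordering of linear risk sharing schemes, so citing it here keeps the toolset uniform. What your approach buys is self-containedness and transparency: the two-line pointwise Jensen argument needs no external theorem, and it makes explicit that only identical marginals---not independence---are actually used, a point the paper's phrasing (``since $X_i$'s are i.i.d.'') slightly obscures even though its cited comonotonic bound likewise holds without independence. Your conditional-expectation variant additionally exhibits $X_i$ as a mean-preserving spread of $\boldsymbol{p}^\top\boldsymbol{X}$, which ties in nicely with the martingale characterization the paper mentions just before the lemma.
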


\begin{proof}
Let $\boldsymbol{X}^+_i=X_i\boldsymbol{1}$, which is a commonotonic version of vector $\boldsymbol{X}$, in the sense that all marginals have the same distribution (since components of $\boldsymbol{X}$ are identically distributed), and $X_i=\boldsymbol{p}^\top\boldsymbol{X}^+_i$. A simple extension of Theorem 6 in \cite{kaas2002} (and Proposition 3.4.29 in \cite{order}) implies that $\boldsymbol{p}^\top\boldsymbol{X}\preceq_{CX}\boldsymbol{p}^\top\boldsymbol{X}^+_i$ for any $i$ (since $X_i$'s are i.i.d.), i.e. $\boldsymbol{p}^\top\boldsymbol{X}\preceq_{CX}X_i$. Note that it can also be seen as a corollary of Property 3.4.48 in \cite{order}, using the fact that $\boldsymbol{p}\prec \boldsymbol{e}_i$ for the majorization order ({%\color{red} 
where $\boldsymbol{e}_i$ is the $i$th standard basis vector, with $1$ at the $i$th position and $0$ anywhere else, i.e. $\boldsymbol{e}_i=(0,0,\cdots,0,1,0,\cdots,0,0)^\top$)}.
\end{proof}

If such a concept can be used to compare two risks, for a single individual, we need a multivariate extension to study the risk over the entire portfolio. Following \cite{DENUIT2012265}, define the componentwise convex order as follows:

\begin{definition}[Componentwise convex order]
Given two vectors $\boldsymbol{X}$ and $\boldsymbol{Y}$, $\boldsymbol{X}$ is said to be smaller than $\boldsymbol{Y}$ in the componentwise convex order if $X_i\preceq_{CX}Y_i$ for all $i=1,2,\dots,n$, and is denoted $\boldsymbol{X}\preceq_{CCX}\boldsymbol{Y}$. If at least one of the convex order inequalities is strict, we will say that  $\boldsymbol{Y}$ dominated $\boldsymbol{X}$, and is denoted $\boldsymbol{X}\prec_{CCX}\boldsymbol{Y}$.
\end{definition}\label{def3-6}

This inequality $\boldsymbol{X}\prec_{CCX}\boldsymbol{Y}$ means that, for a pool of $n$ insured, each and everyone of them prefers $X_i$ to $Y_i$. As mentioned in the introduction, some social insurance systems instead consider a sort of {\em global} objective to instead maximize total welfare (as an example we can again take the Swiss health insurance system \cite{schindler2018solidarity}). Thus it should also be possible to define an order which considers the global interest instead of the componentwise order.  This means that we need to be slightly less restrictive in order to study some possible collective gain of such a coverage, for the group of $n$ insured. In what follows, we can consider some weaker conditions. The first one will be related to the classical idea in economics of a {\em representative agent}, but here, since there might be heterogeneity of agents, we will consider some {\em randomly chosen agent}. A weaker condition than the one in Proposition \ref{prop-mean-var} would be: let $I$ denote a random variable, uniformly distributed over $\{1,2,\cdots,n\}$, and define $\xi'=\xi_I$, then we wish to have
$$
\mathbb{E}[\xi']=\mathbb{E}\big[\mathbb{E}[\xi_I|I]\big]=\mathbb{E}[X_i]\text{ and }\text{Var}[\xi']=\mathbb{E}\big[\text{Var}[\xi_I|I]\big]=\frac{1}{n}\text{trace}\big[\text{Var}[\boldsymbol{\xi}]\big]\leq \text{Var}[X_i]=\sigma^2,
$$
and not necessarily
$$
\mathbb{E}[\xi_i]=\mathbb{E}[X_i]\text{ and }\text{Var}[\xi_i]\leq \text{Var}[X_i],~\forall i\in\{1,2,\dots,n\}.
$$
Note that \cite{Chatfield1981} %and \cite{Johnson2008} 
call $\text{trace}\big[\text{Var}[\boldsymbol{\xi}]\big]$ the {\em total variation} of vector $\boldsymbol{\xi}$. We will not use that name here, as such a quantity is a standard variability measure for random vectors.

\subsection{Linear Risk Sharing}

{%\color{red} 
Loosely stated, a risk sharing scheme of a portfolio of risks is a random vector that reallocates the total risk of the portfolio among the agents. A special form which we will use here is the {\em linear risk sharing scheme}. The use of this in turn allows us to give a tractable solution to P2P insurance on arbitrary networks}. We first follow the definition of \cite{DENUIT2012265} and define an arbitrary risk sharing scheme as:

\begin{definition}[Risk sharing scheme]\label{def:risksharing}
Consider two random vectors $\boldsymbol{\xi}=(\xi_1,\dots,\xi_n)$ and  $\boldsymbol{X}=(X_1,\cdots,X_n)$ on $\mathbb{R}_+^n$. Then $\boldsymbol{\xi}$ is a risk-sharing scheme of $\boldsymbol{X}$ if $X_1+\cdots+X_n=\xi_1+\cdots+\xi_n$ almost surely.
\end{definition}

As an example, the average is a risk sharing principle: $\xi_i=\overline{X}=\displaystyle{\frac{1}{n}\sum_{j=1}^n}X_j$, for any $i$. The definition of linear risk sharing scheme then follows directly:

\begin{definition}[Linear Risk sharing scheme]\label{def:linearrisksharing}
Consider two random vectors $\boldsymbol{\xi}=(\xi_1,\dots,\xi_n)$ and  $\boldsymbol{X}=(X_1,\dots,X_n)$ on $\mathbb{R}_+^n$, such that $\boldsymbol{\xi}$ is a risk-sharing scheme of $\boldsymbol{X}$. It is said to be a linear risk sharing scheme if there exists a matrix $M$, $n\times n$, with positive entries, such that $\boldsymbol{\xi}=M\boldsymbol{X}$, almost surely.
\end{definition}

As an example, consider the average $\xi_i=\overline{X}$ for $i=1,2,\cdots,n$, characterized by matrix $\bar M=[\bar M_{i,j}]$ with $\bar M_{i,j}=1/n$. A direct extension can be obtained with any doubly stochastic matrix. A $n\times n$ matrix $D$ is said to be doubly stochastic if
$$
D=[D_{i,j}]\text{ where }D_{i,j}\geq0,~D_{\cdot,j}^\top \boldsymbol{1}=\sum_{i=1}^n D_{i,j}=1~\forall j,~\text{and}~D_{i,\cdot}^\top \boldsymbol{1}=\sum_{j=1}^n D_{i,j}=1~\forall i.
$$

Then $\boldsymbol{\xi}=D\boldsymbol{X}$ is a linear risk sharing of $\boldsymbol{X}$. To make the link to the literature, row and column conditions on matrix $D$ are called {\em zero-balance conservation} in \cite{feng_2020_flood}. A particular case of a doubly stochastic matrix is a permutation matrix D associated with a permutation of $\lbrace 1, \dots, n \rbrace$. This example with doubly stochastic matrices is important because of connections with the majorization concept (see \cite{schur} and \cite{HLP}). Furthermore \cite{Chang1992}, \cite{Chang93} extended the deterministic version to some stochastic majorization concept, with results that we will use here.

Inspired by the componentwise convex ordering, the following definitions can be considered, as in \cite{DENUIT2012265} (a similar concept can be found in \cite{CARLIER2012207}) that allow us to compare a risk sharing scheme to the no-insurance case

\begin{definition}[Desirable risk sharing schemes (1)]
A risk sharing scheme $\boldsymbol{\xi}$ of $\boldsymbol{X}$ is desirable if $
\boldsymbol{\xi}\preceq_{CCX}\boldsymbol{X}$.
\end{definition}

and to compare risk sharing schemes among each other:

\begin{definition}[Ordering of risk sharing schemes (1)]
Consider two risk sharing schemes $\boldsymbol{\xi}_1$ and $\boldsymbol{\xi}_2$ of $\boldsymbol{X}$. $\boldsymbol{\xi}_1$ dominates $\boldsymbol{\xi}_2$, for the convex order, if $
\boldsymbol{\xi}_{2}\preceq_{CCX}\boldsymbol{\xi}_{1}$.
\end{definition}

As an example to differentiate between a risk-sharing scheme and a desirable risk sharing scheme consider the order statistic. Defined as $\xi_i=X_{(i)}$ where $X_{(1)}\leq X_{(2)}\leq \cdots \leq X_{(n)}$. Given a permutation $\sigma$ of $\{1,2,\dots,n\}$, $\xi_i=X_{\sigma(i)}$, the order statistic defines a risk sharing scheme but not a desirable one. This can be easily seen if the $X_i$ are non-deterministic ordered values since $\mathbb{E}[\xi_n]>\mathbb{E}[X_n]$ we cannot have $\xi_n\preceq_{CX}X_n$. We then extend this framework to the linear risk sharing scheme with the following proposition:

\begin{proposition}[Desirable linear risk sharing schemes (1)]\label{prop:3:10}
If $\boldsymbol{\xi}$ is a linear risk sharing of $\boldsymbol{X}$, $
\boldsymbol{\xi}=D\boldsymbol{X}$ for some doubly stochastic matrix $D$, then $\boldsymbol{\xi}$ is desirable.
\end{proposition}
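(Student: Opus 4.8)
The plan is to reduce the statement to a componentwise application of Lemma \ref{lemma-prob-cx}. First I would check that $\boldsymbol{\xi}=D\boldsymbol{X}$ is genuinely a risk sharing scheme in the sense of Definition \ref{def:risksharing}: summing components gives $\sum_{i=1}^n \xi_i = \sum_{i=1}^n\sum_{j=1}^n D_{i,j}X_j = \sum_{j=1}^n X_j\big(\sum_{i=1}^n D_{i,j}\big) = \sum_{j=1}^n X_j$, where the last step uses the column-sum (``zero-balance conservation'') condition $\sum_i D_{i,j}=1$. This identity holds almost surely, being an identity of random variables, so the defining requirement of a risk sharing scheme is satisfied.

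Next, I would fix $i\in\{1,\dots,n\}$ and examine the $i$th component $\xi_i = D_{i,\cdot}^\top\boldsymbol{X} = \sum_{j=1}^n D_{i,j}X_j$. By the remaining properties of a doubly stochastic matrix, the row vector $\boldsymbol{p}:=D_{i,\cdot}$ has nonnegative entries summing to one, i.e.\ it is an $n$-dimensional probability vector. Since the $X_j$'s are i.i.d.\ by Hypothesis \ref{hyp9}, Lemma \ref{lemma-prob-cx} applies directly and gives $\boldsymbol{p}^\top\boldsymbol{X}\preceq_{CX}X_k$ for every $k$; taking $k=i$ yields $\xi_i\preceq_{CX}X_i$. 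As this holds for each $i$, Definition \ref{def3-6} of the componentwise convex order gives $\boldsymbol{\xi}\preceq_{CCX}\boldsymbol{X}$, which is exactly the assertion that $\boldsymbol{\xi}$ is desirable.

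There is no real obstacle here, since the substance has been front-loaded into Lemma \ref{lemma-prob-cx}; the only point deserving a moment's care is that double stochasticity is used in two distinct roles — the column sums guarantee that $\boldsymbol{\xi}$ conserves the total risk and hence is a bona fide risk sharing scheme, while the row sums together with nonnegativity let each row act as the probability vector $\boldsymbol{p}$ in the lemma. If one wanted an argument that bypasses the lemma, the alternative is to invoke $\boldsymbol{p}\prec\boldsymbol{e}_i$ for the majorization order together with the stochastic majorization results of \cite{Chang1992}, \cite{Chang93}, but citing Lemma \ref{lemma-prob-cx} is the shortest route.
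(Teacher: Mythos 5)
Your argument is correct and follows the same route as the paper's own proof: each row $D_{i\cdot}$ of a doubly stochastic matrix is a probability vector, so Lemma \ref{lemma-prob-cx} gives $\xi_i = D_{i\cdot}^\top\boldsymbol{X}\preceq_{CX}X_i$ for every $i$, hence $\boldsymbol{\xi}\preceq_{CCX}\boldsymbol{X}$. Your preliminary check that the column sums make $\boldsymbol{\xi}$ a genuine risk sharing scheme is a sensible (if strictly redundant, given the hypothesis) addition that the paper handles in the surrounding text rather than in the proof itself.
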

\begin{proof}
Assume that $
\boldsymbol{\xi}=D\boldsymbol{X}$ for some double stochastic matrix $D$, with rows $D_{i\cdot}$'s. Since $D$ is doubly stochastic, $D_{i\cdot}$ is a probability measure over $\{1,\dots,n\}$ (with positive components that sum to one), and from Lemma \ref{lemma-prob-cx}, $D_{i\cdot}^\top \boldsymbol{X}\preceq_{CX} X_i$, for all $i$.
\end{proof}

We then extend the proposition to the ordering of linear risk sharing mechanisms. This will allow us to compare different linear risk sharing schemes:

\begin{proposition}[Ordering of Linear risk sharing schemes (1)]\label{prop3-12}
Consider two linear risk sharing schemes $\boldsymbol{\xi}_1$ and $\boldsymbol{\xi}_2$ of $\boldsymbol{X}$, such that there is a doubly stochastic matrix $D$, $n\times n$ such that $
\boldsymbol{\xi}_{2}=D\boldsymbol{\xi}_{1}$. Then $
\boldsymbol{\xi}_{2}\preceq_{CCX}\boldsymbol{\xi}_{1}$.
\end{proposition}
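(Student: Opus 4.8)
The plan is to reduce the multivariate statement to $n$ univariate convex‑order comparisons and then to treat each of them exactly as in Proposition \ref{prop:3:10}, with $\boldsymbol{\xi}_1$ now playing the role that $\boldsymbol{X}$ played there. Fix a coordinate $i$. Since $\boldsymbol{\xi}_2=D\boldsymbol{\xi}_1$ we have $\xi_{2,i}=D_{i\cdot}^\top\boldsymbol{\xi}_1=\sum_{j=1}^n D_{i,j}\,\xi_{1,j}$, and because $D$ is doubly stochastic the row $D_{i\cdot}$ has non‑negative entries summing to one, i.e.\ it is a probability vector on $\{1,\dots,n\}$. Hence $\xi_{2,i}$ is a probability mixture (a convex combination) of the components of $\boldsymbol{\xi}_1$, and one would like to invoke Lemma \ref{lemma-prob-cx} to conclude $\xi_{2,i}=D_{i\cdot}^\top\boldsymbol{\xi}_1\preceq_{CX}\xi_{1,i}$; doing this for every $i$ is precisely $\boldsymbol{\xi}_2\preceq_{CCX}\boldsymbol{\xi}_1$, the desired conclusion.

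The step that needs care — and the main obstacle — is this invocation of Lemma \ref{lemma-prob-cx}, which is stated for an \emph{i.i.d.} vector, whereas the components of a risk‑sharing scheme $\boldsymbol{\xi}_1$ are in general neither independent nor even identically distributed. The clean way around it is to run the one‑line argument directly: for any convex $g$, Jensen's inequality applied to the probability weights $D_{i\cdot}$ gives $g\big(\sum_j D_{i,j}\xi_{1,j}\big)\le\sum_j D_{i,j}\,g(\xi_{1,j})$ pointwise, hence $\mathbb{E}[g(\xi_{2,i})]\le\sum_j D_{i,j}\,\mathbb{E}[g(\xi_{1,j})]$; if the components of $\boldsymbol{\xi}_1$ share a common marginal then $\mathbb{E}[g(\xi_{1,j})]=\mathbb{E}[g(\xi_{1,i})]$ for every $j$, and since $\sum_j D_{i,j}=1$ this collapses to $\mathbb{E}[g(\xi_{2,i})]\le\mathbb{E}[g(\xi_{1,i})]$, i.e.\ $\xi_{2,i}\preceq_{CX}\xi_{1,i}$. (Equivalently one can replace $\boldsymbol{\xi}_1$ by its comonotonic counterpart with the same marginals, use the comonotonic upper bound — Theorem 6 in \cite{kaas2002}, Proposition 3.4.29 in \cite{order} — exactly as in the proof of Lemma \ref{lemma-prob-cx}, and note $D_{i\cdot}^\top(G^{-1}(U)\boldsymbol{1})=G^{-1}(U)\overset{\mathcal L}{=}\xi_{1,i}$; or package it through the stochastic‑majorization results of \cite{Chang1992} and \cite{Chang93}, using $D_{i\cdot}\prec\boldsymbol{e}_k$ in the majorization order.) So the argument requires the mild extra hypothesis that $\boldsymbol{\xi}_1$ has exchangeable (or at least identically distributed) components — the natural anonymity/fairness requirement for a reciprocal P2P scheme, automatically met in the homogeneous and regular‑network cases of interest; without it even the componentwise mean identity $\mathbb{E}[\xi_{2,i}]=\mathbb{E}[\xi_{1,i}]$ can fail, so this hypothesis should be made explicit.

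Finally, I would record the composition remark, since it shows the ordering is consistent with desirability and makes the chaining transparent: writing $\boldsymbol{\xi}_1=D_1\boldsymbol{X}$ with $D_1$ doubly stochastic, one has $\boldsymbol{\xi}_2=(DD_1)\boldsymbol{X}$ with $DD_1$ again doubly stochastic (the doubly stochastic matrices form a semigroup), so Proposition \ref{prop:3:10} yields $\boldsymbol{\xi}_2\preceq_{CCX}\boldsymbol{X}$; together with the statement just proved this gives the full chain $\boldsymbol{\xi}_2\preceq_{CCX}\boldsymbol{\xi}_1\preceq_{CCX}\boldsymbol{X}$.
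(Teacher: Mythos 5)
Your argument is correct (under the extra hypothesis you state) and it takes a genuinely different route from the paper's. The paper pulls everything back to the underlying i.i.d.\ vector $\boldsymbol{X}$: writing $\boldsymbol{\xi}_1=M_1\boldsymbol{X}$ and $\boldsymbol{\xi}_2=DM_1\boldsymbol{X}$, it asserts that $M_2=DM_1$ entails row-wise majorization $M_{2:i\cdot}\prec M_{1:i\cdot}$ (invoking matrix majorization) and then applies Property 3.4.48 of \cite{order} to the i.i.d.\ $\boldsymbol{X}$, so that no assumption on $\boldsymbol{\xi}_1$ itself is ever mentioned. You instead work directly on $\boldsymbol{\xi}_1$, replacing Lemma \ref{lemma-prob-cx} by a pointwise Jensen inequality on the probability vector $D_{i\cdot}$ plus the assumption that the components of $\boldsymbol{\xi}_1$ share a common marginal. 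What your route buys is that it makes explicit exactly where an additional hypothesis enters; what the paper's route would buy, if its majorization step held, is that no such condition on $\boldsymbol{\xi}_1$ would be needed.

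Your caution at the ``step that needs care'' is in fact fully warranted, because the paper's majorization step is the one that fails. Left-multiplication by a doubly stochastic $D$ mixes the \emph{rows} of $M_1$: row $i$ of $DM_1$ is $\sum_j D_{i,j}M_{1:j\cdot}$, which need not be majorized by $M_{1:i\cdot}$ (it is right-multiplication, $M_1D$, that acts within each row and yields row-wise majorization). Concretely, take $M_1$ doubly stochastic with rows $(1,0,0)$, $(0,\tfrac12,\tfrac12)$, $(0,\tfrac12,\tfrac12)$ and let $D$ be the transposition of coordinates $1$ and $2$; then $\xi_{2,2}=X_1$ while $\xi_{1,2}=(X_2+X_3)/2$, and $X_1\preceq_{CX}(X_2+X_3)/2$ fails (test $g(x)=x^2$: $\sigma^2+\mu^2>\tfrac12\sigma^2+\mu^2$). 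So the proposition as literally stated does require something like your exchangeability/identical-marginals hypothesis on $\boldsymbol{\xi}_1$, under which both your Jensen argument and your comonotonic-upper-bound variant are valid; your observation that even the componentwise means can otherwise differ pinpoints the obstruction. Your closing composition remark ($DD_1$ doubly stochastic, hence $\boldsymbol{\xi}_2\preceq_{CCX}\boldsymbol{X}$ by Proposition \ref{prop:3:10}) is correct and a useful consistency check, though note it presumes $\boldsymbol{\xi}_1$ itself arises from a doubly stochastic matrix, which Definition \ref{def:linearrisksharing} does not require.
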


\begin{proof}
This can be seen as an extension of Proposition \ref{prop:3:10}.
Here $\boldsymbol{\xi}_{1}=M_1\boldsymbol{X}$ and $
\boldsymbol{\xi}_{2}=M_2\boldsymbol{X}$, since we consider linear risk sharings, and we can write the later $\boldsymbol{\xi}_{2}=M_2\boldsymbol{X}=DM_1\boldsymbol{X}$, thus $M_2=DM_1$, since equality is valid for all $\boldsymbol{X}$. As defined in \cite{DAHL199953} and \cite{BEASLEY2000141}, this corresponds to the {\em matrix majorization} concept from above, in the sense that $M_2\prec M_1$, that implies standard majorization per row, thus for all $i$, $M_{2:i\cdot}\prec M_{1:i\cdot}$, from Lemma 2.8 in \cite{BEASLEY2000141}. 
But from Property 3.4.48 in \cite{order}, we know that if $\boldsymbol{a}\prec\boldsymbol{b}$ for the majorization order, $\boldsymbol{a}^\top\boldsymbol{X}\preceq_{CX}\boldsymbol{b}^\top\boldsymbol{X}$. Thus, if $\boldsymbol{a}=M_{2:i\cdot}$ and $\boldsymbol{b}=M_{1:i\cdot}$, we have that for any $i$, $\xi_{2:i}\preceq_{CX}\xi_{1:i}$, and therefore we have the componentwise order $
\boldsymbol{\xi}_{2}\preceq_{CCX}\boldsymbol{\xi}_{1}$.
\end{proof}

As we will show in Section \ref{sec:nonlinear}, the variance of the degree vector will have a major influence on the sustainablity of an insurance scheme. Hence we consider the following proposition:

\begin{proposition}[Variance comparison]
Consider two linear risk sharing schemes $\boldsymbol{\xi}_1$ and $\boldsymbol{\xi}_2$ of $\boldsymbol{X}$, such that $\boldsymbol{\xi}_2=D\boldsymbol{\xi}_1$, for some doubly-stochastic matrix. Then
$\operatorname{Var}[\boldsymbol{\xi}_{2:i}]\leq\operatorname{Var}[\boldsymbol{\xi}_{1:i}]$ for all $i$.
\end{proposition}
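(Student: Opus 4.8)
The statement is a direct corollary of the two results already established. The plan is: first invoke Proposition~\ref{prop3-12}. Since $\boldsymbol{\xi}_2 = D\boldsymbol{\xi}_1$ with $D$ doubly stochastic, that proposition gives the componentwise convex order $\boldsymbol{\xi}_2 \preceq_{CCX}\boldsymbol{\xi}_1$, i.e. $\xi_{2:i}\preceq_{CX}\xi_{1:i}$ for every $i\in\{1,\dots,n\}$. Second, apply Proposition~\ref{prop-mean-var} to each pair $(\xi_{2:i},\xi_{1:i})$: the convex order inequality forces equal means and $\operatorname{Var}[\xi_{2:i}]\le\operatorname{Var}[\xi_{1:i}]$. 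That is exactly the claim, so essentially no real obstacle remains; the work was already done in Propositions~\ref{prop-mean-var} and~\ref{prop3-12}.

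If one prefers an argument that does not pass through the convex order, there is an equally short direct route that exploits Hypothesis~\ref{hyp9}. Write $\boldsymbol{\xi}_1 = M_1\boldsymbol{X}$ and $\boldsymbol{\xi}_2 = M_2\boldsymbol{X}$ with $M_2 = DM_1$. Because the $X_j$ are i.i.d.\ with common variance $\sigma^2$, their covariance matrix is $\sigma^2 I_n$, hence $\operatorname{Var}[\xi_{k:i}] = \sigma^2\, M_{k:i\cdot}^\top M_{k:i\cdot} = \sigma^2\sum_{j} M_{k:i,j}^2$ for $k=1,2$. By Lemma~2.8 in \cite{BEASLEY2000141}, the relation $M_2 = DM_1$ with $D$ doubly stochastic yields the row majorization $M_{2:i\cdot}\prec M_{1:i\cdot}$. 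Since $\boldsymbol{v}\mapsto\sum_j v_j^2$ is symmetric and convex, it is Schur-convex, so $\sum_j M_{2:i,j}^2 \le \sum_j M_{1:i,j}^2$, giving $\operatorname{Var}[\xi_{2:i}]\le\operatorname{Var}[\xi_{1:i}]$.

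The only point that deserves care, in either route, is that the components of $\boldsymbol{\xi}_1$ are in general neither independent nor identically distributed, so one cannot apply Lemma~\ref{lemma-prob-cx} directly to the representation $\xi_{2:i} = D_{i\cdot}^\top\boldsymbol{\xi}_1$; the i.i.d.\ structure must be used at the level of $\boldsymbol{X}$, which is precisely what both arguments above do (the first implicitly, via Proposition~\ref{prop3-12}, the second explicitly, via the covariance computation). I would present the first, one-line argument as the proof and perhaps remark on the second.
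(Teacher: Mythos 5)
Your primary argument is exactly the paper's proof: it invokes Proposition~\ref{prop3-12} to get $\boldsymbol{\xi}_{2}\preceq_{CCX}\boldsymbol{\xi}_{1}$ and then reads off the variance inequality componentwise from Proposition~\ref{prop-mean-var}, so it is correct and matches the paper's route. Your alternative direct computation via Schur-convexity of $\boldsymbol{v}\mapsto\sum_j v_j^2$ is a sound bonus, but the paper does not take that path.
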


\begin{proof}
This is simply the fact that from Proposition \ref{prop3-12}, $
\boldsymbol{\xi}_{2}\preceq_{CCX}\boldsymbol{\xi}_{1}$, i.e. from Definition \ref{def3-6}, $\boldsymbol{\xi}_{2:i}\preceq_{CCX}\boldsymbol{\xi}_{1:i}$ for all $i$ and $\operatorname{Var}[\boldsymbol{\xi}_{2:i}]\leq\operatorname{Var}[\boldsymbol{\xi}_{1:i}]$ follows from the property on the convex order mentioned earlier.
\end{proof}

A simple corollary is the following:

\begin{proposition}[Trace of variance matrix]\label{prop:3:7}
Consider two linear risk sharing schemes $\boldsymbol{\xi}_1$ and $\boldsymbol{\xi}_2$ of $\boldsymbol{X}$, such that $\boldsymbol{\xi}_2=D\boldsymbol{\xi}_1$, for some doubly-stochastic matrix. Then
$\operatorname{trace}\big[\operatorname{Var}[\boldsymbol{\xi}_2]\big]\leq\operatorname{trace}\big[\operatorname{Var}[\boldsymbol{\xi}_1]\big]$.
\end{proposition}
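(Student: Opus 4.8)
The plan is to deduce this directly from the preceding \textbf{Variance comparison} proposition, using only the elementary fact that the trace of a covariance matrix equals the sum of its diagonal entries, i.e.\ the sum of the marginal variances.

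First I would record the identity
$$
\operatorname{trace}\big[\operatorname{Var}[\boldsymbol{\xi}]\big]=\sum_{i=1}^n\operatorname{Var}[\boldsymbol{\xi}_{:i}]
$$
valid for any square-integrable random vector $\boldsymbol{\xi}$, since the $(i,i)$ entry of the variance–covariance matrix $\operatorname{Var}[\boldsymbol{\xi}]$ is precisely $\operatorname{Var}[\boldsymbol{\xi}_{:i}]$. Applying this to both $\boldsymbol{\xi}_1$ and $\boldsymbol{\xi}_2$ reduces the claim to comparing the two sums $\sum_i\operatorname{Var}[\boldsymbol{\xi}_{2:i}]$ and $\sum_i\operatorname{Var}[\boldsymbol{\xi}_{1:i}]$.

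Next I would invoke the \textbf{Variance comparison} proposition, which under the hypothesis $\boldsymbol{\xi}_2=D\boldsymbol{\xi}_1$ with $D$ doubly stochastic gives $\operatorname{Var}[\boldsymbol{\xi}_{2:i}]\leq\operatorname{Var}[\boldsymbol{\xi}_{1:i}]$ for every $i\in\{1,\dots,n\}$. Summing these $n$ inequalities (a finite sum of real inequalities is again an inequality) yields $\sum_i\operatorname{Var}[\boldsymbol{\xi}_{2:i}]\leq\sum_i\operatorname{Var}[\boldsymbol{\xi}_{1:i}]$, which by the identity above is exactly $\operatorname{trace}\big[\operatorname{Var}[\boldsymbol{\xi}_2]\big]\leq\operatorname{trace}\big[\operatorname{Var}[\boldsymbol{\xi}_1]\big]$.

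There is essentially no obstacle here: the statement is a genuine corollary, and the only thing worth being mildly careful about is that the trace must be recognized as picking out exactly the marginal variances (rather than anything involving the off-diagonal covariances), and that all variances are finite so the termwise summation is legitimate. If one wanted an alternative route avoiding the componentwise proposition, one could instead note that $\operatorname{trace}\big[\operatorname{Var}[\boldsymbol{\xi}]\big]=n\operatorname{Var}[\xi']$ for the randomly chosen agent $\xi'=\xi_I$ introduced earlier, and compare the two representative-agent variances; but the direct summation is the cleanest and is the one I would present.
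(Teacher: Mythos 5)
Your proof is correct, and it is arguably the argument the paper itself advertises when it introduces the statement with the phrase ``A simple corollary is the following'': the trace of a covariance matrix is the sum of the marginal variances, the preceding Variance comparison proposition gives $\operatorname{Var}[\boldsymbol{\xi}_{2:i}]\leq\operatorname{Var}[\boldsymbol{\xi}_{1:i}]$ for every $i$ under exactly the same hypotheses, and summing finitely many inequalities finishes the job. The paper's written proof, however, takes a genuinely different route: it writes $\operatorname{Var}[\boldsymbol{\xi}_2]=D\operatorname{Var}[\boldsymbol{\xi}_1]D^\top$, observes that $M\mapsto\operatorname{trace}\bigl[M\operatorname{Var}[\boldsymbol{\xi}_1]M^\top\bigr]$ is convex, decomposes $D$ into a convex combination of permutation matrices via the Birkhoff--von Neumann theorem, and uses the invariance of the trace under conjugation by permutations. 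The trade-off is worth noting. Your route is shorter but inherits the full hypotheses of the chain Lemma \ref{lemma-prob-cx} $\to$ Proposition \ref{prop3-12} $\to$ Variance comparison, which ultimately rests on the i.i.d.\ assumption on $\boldsymbol{X}$ (Hypothesis \ref{hyp9}) needed for the majorization-to-convex-order step. The paper's Birkhoff--von Neumann argument is self-contained and needs only that $\operatorname{Var}[\boldsymbol{\xi}_1]$ is positive semi-definite, so it remains valid for an arbitrary dependence structure of $\boldsymbol{\xi}_1$; this extra generality is what the longer proof buys. Your alternative remark that one could instead compare $n\operatorname{Var}[\xi']$ for the randomly chosen agent is also fine, since that quantity is just the trace divided by $n$.
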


\begin{proof}
Observe first that $\operatorname{Var}[\boldsymbol{\xi}_2]=D\operatorname{Var}[\boldsymbol{\xi}_1]D^\top$. Since the variance matrix $\operatorname{Var}[\boldsymbol{\xi}_1]$ is positive, the application $h:M\mapsto \operatorname{trace}\big[M\operatorname{Var}[\boldsymbol{\xi}_1]M^\top\big]$ is convex. By the Birkhoff-von Neumann theorem (see \cite{birkhoff1946three}), every doubly stochastic matrix is a convex combination of permutation matrices, i.e. $D=\omega_1 P_1+\cdots+\omega_k P_k$ for some permutation matrices $P_1,\cdots,P_k$ (and some positive weights $\omega_i$ that sum to 1). By the convexity of $h$,
$$
\operatorname{trace}\big[\operatorname{Var}[\boldsymbol{\xi}_2]\big]=h(D)\leq \sum_{i=1}^k\omega_i h(P_i)=\sum_{i=1}^k\omega_i \operatorname{trace}\big[P_i\operatorname{Var}[\boldsymbol{\xi}_1]P_i^\top\big],
$$
and therefore
$$
\operatorname{trace}\big[\operatorname{Var}[\boldsymbol{\xi}_2]\big]\leq\sum_{i=1}^k\omega_i \operatorname{trace}\big[\operatorname{Var}[\boldsymbol{\xi}_1]\big]= \operatorname{trace}\big[\operatorname{Var}[\boldsymbol{\xi}_1]\big].
$$
\end{proof}

A special case of the linear risk sharing arises when the graph is complete. In this case the minimum, will arise when every node in the graph assumes $\frac{1}{n}$ of the total risk. A closely related example arises when subgroups are complete, as for example in \cite{feng_2020_flood}. A brief consideration of such a situation can be found in the supplementary materials.

\subsection{Weaker orderings}

{%\color{red}
P2P insurance might contain a strong {\em social} component, where the focus is on global welfare rather than individual well-being. See for example the case of \emph{takaful} in \cite{feng2022unified} or \emph{friend} in friendsurance}. In order to study a situation that goes beyond individual optimization, we also consider weaker orderings of insurance mechanisms. To this end, we follow \cite{MARTINEZPERIA2005343}
and replace the doubly stochastic matrix in the previous section  by a column stochastic matrix (or left-stochastic matrix). By definition
such a matrix $C$ satisfies
$$
C=[C_{i,j}]\text{ where }C_{i,j}\geq0,~\text{and }\sum_{i=1}^n C_{i,j}=1~\forall j.
$$

Transforming the risk vector $\boldsymbol{X}$ with a column stochastic matrix then results in the following proposition:

\begin{proposition} Let $C$ be some $n\times n$ column-stochastic matrix, and given $\boldsymbol{X}$, a positive vector in $\mathbb{R}^+$, define $\boldsymbol{\xi}=C\boldsymbol{X}$. Then $\boldsymbol{\xi}$ is a linear risk sharing of $\boldsymbol{X}$.
\end{proposition}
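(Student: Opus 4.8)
The plan is to verify directly that $\boldsymbol{\xi}=C\boldsymbol{X}$ satisfies the two requirements of Definition \ref{def:linearrisksharing}: first that $\boldsymbol{\xi}$ is a risk-sharing scheme of $\boldsymbol{X}$ in the sense of Definition \ref{def:risksharing}, i.e.\ that the aggregate loss is conserved almost surely, and second that there exists a matrix $M$ with nonnegative entries such that $\boldsymbol{\xi}=M\boldsymbol{X}$. The second point is immediate upon taking $M=C$, since by definition every column-stochastic matrix has nonnegative entries; so the only thing that requires a (very short) argument is the conservation of the total.

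For the conservation property I would compute the aggregate of the reallocated vector by summing over rows: $\sum_{i=1}^n \xi_i = \sum_{i=1}^n \sum_{j=1}^n C_{i,j} X_j = \sum_{j=1}^n \big(\sum_{i=1}^n C_{i,j}\big) X_j$, and then invoke column-stochasticity to replace each inner sum $\sum_{i=1}^n C_{i,j}$ by $1$, obtaining $\sum_{i=1}^n \xi_i = \sum_{j=1}^n X_j$. Equivalently, in matrix notation, $\boldsymbol{1}^\top\boldsymbol{\xi}=\boldsymbol{1}^\top C\boldsymbol{X}=\boldsymbol{1}^\top\boldsymbol{X}$, because $\boldsymbol{1}^\top C=\boldsymbol{1}^\top$ is precisely the statement that all column sums of $C$ equal one. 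Since $\boldsymbol{X}\in\mathbb{R}_+^n$ and $C$ has nonnegative entries, $\boldsymbol{\xi}=C\boldsymbol{X}\in\mathbb{R}_+^n$ as well, so $\boldsymbol{\xi}$ and $\boldsymbol{X}$ are both random vectors on $\mathbb{R}_+^n$ as Definition \ref{def:risksharing} requires, and the identity above holds for every realization, hence almost surely.

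There is essentially no obstacle here: the claim is a direct unwinding of the definitions, and the one-line observation that carries it is that ``all column sums equal one'' is exactly the condition making $\boldsymbol{1}^\top$ a left eigenvector of $C$ with eigenvalue $1$, which is the algebraic content of ``no mass is created or destroyed when the losses are redistributed''. The only point worth flagging — and the reason this weaker class of matrices is introduced at this stage — is that, unlike the doubly-stochastic case of Proposition \ref{prop:3:10}, column-stochasticity alone does \emph{not} make each row $C_{i\cdot}$ a probability vector, so Lemma \ref{lemma-prob-cx} cannot be applied row by row and one cannot conclude $\boldsymbol{\xi}\preceq_{CCX}\boldsymbol{X}$; the desirability of such schemes will only hold under the weaker, ``representative agent'' orderings discussed subsequently.
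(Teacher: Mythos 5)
Your proof is correct and follows essentially the same route as the paper's: verify nonnegativity of $\boldsymbol{\xi}=C\boldsymbol{X}$ and then swap the order of summation so that each column sum of $C$ collapses to $1$, giving $\sum_i \xi_i = \sum_j X_j$. The additional remarks on the matrix form $\boldsymbol{1}^\top C = \boldsymbol{1}^\top$ and on why rows of $C$ need not be probability vectors are accurate but not needed for the statement itself.
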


\begin{proof}
First, observe that since $C$ is a matrix with positive entries, $\boldsymbol{\xi}\in\mathbb{R}^n_+$. Moreover, 
$$
\sum_{i=1}^n\xi_i=\sum_{i=1}^n \sum_{j=1}^n C_{i,j}X_j=\sum_{j=1}^n\left(\sum_{i=1}^n  C_{i,j}\right)X_j=\sum_{j=1}^nX_j. 
$$
\end{proof}

To illustrate why weaker orderings present a complement to the  orderings derived earlier consider the following example:

\begin{example}[Weaker ordering]\label{ex:simp}
     Consider the two following matrices (where $D$ corresponds to $C$ when $\alpha=0$),
    $$
    D = \begin{pmatrix}
    1 & 0 & 0 \\
    0 & 1/2 & 1/2 \\
    0 & 1/2 & 1/2 \\
    \end{pmatrix}
    \text{ and }
    C = \begin{pmatrix}
    1 & \alpha & \alpha \\
    0 & (1-\alpha)/2 & (1-\alpha)/2 \\
    0 & (1-\alpha)/2 & (1-\alpha)/2 \\
    \end{pmatrix},\text{ with }\alpha\in(0,1).
    $$
    Note that $D$ is doubly stochastic but $C$ is only column stochastic. Further, take a non-negative random vector  $\boldsymbol{X}=(X_1,X_2,X_3)$ where the three components are i.i.d.\ and represent the individual losses. Denote
    $\mu=\mathbb{E}[X_i]$ and $\sigma^2=\text{Var}[X_i]$.
    Consider first the risk sharing scheme $\boldsymbol{\xi} = (\xi_1, \xi_2, \xi_3)$,  $\boldsymbol{\xi} = D X$, that is $\xi_1=X_1$ and $\xi_2=\xi_3=\overline{X}_{23} = (X_2+X_3)/2$. We choose a random component of $\boldsymbol{\xi}$ selected by a random variable $I$ which is uniformly distributed over $\{1,2,3\}$ (and independent of $\boldsymbol{X}$). Then $\mu=\mathbb{E}[X_I]$, and $\sigma^2=\text{Var}[X_I]$. Thus
    $$
    \mathbb{E}[\xi_I]=\frac{1}{3}\sum_{i=1}^3\mathbb{E}[\xi_i]=\frac{1}{3}\left[\mu+2\cdot\mu\right]=\mu=\mathbb{E}[X_I],
    $$
    $$
    \text{Var}[\xi_I]=\frac{1}{3}\sum_{i=1}^3\text{Var}[\xi_i]=\frac{1}{3}\left[\sigma^2+2\cdot \frac{2\sigma^2}{4}\right]=\frac{2}{3}\sigma^2<\text{Var}[X_I].
    $$
    This is consistent with the fact that $\boldsymbol{\xi}$ is a desirable risk sharing principle (proposition \ref{prop:3:10}). Now consider a second mechanism $\boldsymbol{\eta} = (\eta_1, \eta_2, \eta_3)$ such that $\boldsymbol{\eta}=CX$. In that case we cannot compare $\boldsymbol{X}$ and $\boldsymbol{\eta}$ directly using proposition \ref{prop:3:10} since $C$ is not doubly stochastic. Using  $I$ in the same way as before, we get:
    $$
    \mathbb{E}[\eta_I]=\frac{1}{3}\sum_{i=1}^3\mathbb{E}[\eta_i]=\frac{1}{3}\left[(1+2\alpha)\mu+2\cdot(1-\alpha)\mu\right]=\mu=\mathbb{E}[X_I],
    $$
    $$
    \text{Var}[\eta_I]=\frac{1}{3}\sum_{i=1}^3\text{Var}[\eta_i]=\frac{1}{3}\left[(1+2\alpha^2)\sigma^2+\frac{4\cdot(1-\alpha)^2 \sigma^2}{4}\right]=\frac{1+2\alpha^2+(1-\alpha)^2}{3}\sigma^2<\text{Var}[X_I].
    $$
    So $CX$ and $DX$ have similar properties. However, note that this only works in expectation over $I$. In the second scheme, the first agent takes on more risk than she initially had, which does not suit well when the individual utility only depends on individual risk. Instead, if the utility function of that agent contains a social component, as in the social insurance case, this will still be rational. An example would be a family or friends who also value the global well-being of the group. \end{example}

Following \cite{MARTINEZPERIA2005343} further, we define an ordering based on column-stochastic matrices:

\begin{definition}[Weak Ordering of Linear risk sharing schemes (1)]
Consider two linear risk sharing schemes $\boldsymbol{\xi}_1$ and $\boldsymbol{\xi}_2$ of $\boldsymbol{X}$. $\boldsymbol{\xi}_1$ weakly dominates $\boldsymbol{\xi}_2$, denoted $
\boldsymbol{\xi}_{2}\preceq_{wCX}\boldsymbol{\xi}_{1}$ if and only if there is a column-stochastic $n\times n$ matrix $C$, such that 
$
\boldsymbol{\xi}_{2}=C\boldsymbol{\xi}_{1}$.
\end{definition}

In the example above, it is easy to see that $\boldsymbol{\eta}\preceq_{wCX}\boldsymbol{X}$ and with the definition we can compare weaker orderings. %, for example consider $C_2 = DC${\color {red} I  do not understand this equation}. 
Note though that Proposition \ref{prop:3:7} cannot be extended when matrices are only column-stochastic, in the sense that $
\boldsymbol{\xi}_{2}=C\boldsymbol{\xi}_{1}$ for some column-stochastic matrix $C$ is not sufficient to guarantee  $\text{trace}[\text{Var}[\boldsymbol{\xi}_2]]=\text{trace}[C\text{Var}[\boldsymbol{\xi}_1]C^\top]\leq \text{trace}[\text{Var}[\boldsymbol{\xi}_1]]$. Some heuristic interpretation of that result are given in the supplementary material online. However, such an (in)equality does hold when $\boldsymbol{\xi}_1$ are independent risks as shown in the following proposition:

\begin{proposition}
If $\boldsymbol{\xi}$ is a linear risk sharing of $\boldsymbol{X}$, where $X_i$'s are independent risks with variance $\sigma^2$, associated with some column-stochastic matrix $C$, then
$$
\operatorname{trace}\big[\text{Var}[\boldsymbol{\xi}]\big]=\sigma^2 \operatorname{trace}\big[CC^\top\big]\leq \sigma^2n
$$
and therefore $\operatorname{Var}[\xi']\leq\operatorname{Var}[X]=\sigma^2$.
\end{proposition}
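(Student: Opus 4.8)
The plan is to exploit the independence of the $X_i$'s to get an explicit formula for the covariance matrix of $\boldsymbol{\xi}$, and then bound its trace using only the column-sum constraint on $C$. Since the $X_i$ are independent with common variance $\sigma^2$, we have $\operatorname{Var}[\boldsymbol{X}]=\sigma^2 I_n$. Because $\boldsymbol{\xi}=C\boldsymbol{X}$ almost surely (linear risk sharing), the standard transformation rule for covariance matrices gives
$$
\operatorname{Var}[\boldsymbol{\xi}]=C\operatorname{Var}[\boldsymbol{X}]C^\top=\sigma^2 CC^\top,
$$
so $\operatorname{trace}\big[\operatorname{Var}[\boldsymbol{\xi}]\big]=\sigma^2\operatorname{trace}\big[CC^\top\big]=\sigma^2\sum_{i=1}^n\sum_{j=1}^n C_{i,j}^2$, which is the first asserted equality.

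Next I would bound $\operatorname{trace}[CC^\top]=\sum_{i,j}C_{i,j}^2$ column by column. Fix a column $j$. Since $C$ is column-stochastic, its entries $C_{1,j},\dots,C_{n,j}$ are nonnegative and sum to $1$; for nonnegative numbers summing to $1$ the sum of squares is at most the square of the sum, i.e.
$$
\sum_{i=1}^n C_{i,j}^2\leq\Big(\sum_{i=1}^n C_{i,j}\Big)^2=1.
$$
Summing this over the $n$ columns yields $\operatorname{trace}[CC^\top]\leq n$, hence $\operatorname{trace}\big[\operatorname{Var}[\boldsymbol{\xi}]\big]\leq\sigma^2 n$.

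Finally, for the statement about $\xi'=\xi_I$ with $I$ uniform on $\{1,\dots,n\}$ and independent of $\boldsymbol{X}$, I would invoke the identity already recorded in the text, $\operatorname{Var}[\xi']=\mathbb{E}\big[\operatorname{Var}[\xi_I\mid I]\big]=\frac1n\operatorname{trace}\big[\operatorname{Var}[\boldsymbol{\xi}]\big]$ (using $\mathbb{E}[\xi']=\mathbb{E}[X_i]$ so the between-$I$ variance term vanishes, since each row of $C$ need not be stochastic but $\mathbb{E}[\xi_i]=\mu\sum_j C_{i,j}$ — here one must be slightly careful, and in fact the clean conclusion $\mathbb{E}[\xi_i]=\mu$ for all $i$ requires that row sums also equal $1$; absent that, one should read $\operatorname{Var}[\xi']$ as $\tfrac1n\operatorname{trace}[\operatorname{Var}[\boldsymbol\xi]]$ directly). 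Combining with the trace bound gives $\operatorname{Var}[\xi']\leq\frac1n\cdot\sigma^2 n=\sigma^2=\operatorname{Var}[X]$.

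The computations here are entirely routine; the only point deserving care — and the one I would flag as the main (minor) obstacle — is the elementary inequality $\sum_i C_{i,j}^2\leq(\sum_i C_{i,j})^2$, which crucially uses nonnegativity of the entries (it fails for signed entries), together with making sure the reduction $\operatorname{Var}[\xi']=\tfrac1n\operatorname{trace}[\operatorname{Var}[\boldsymbol\xi]]$ is applied consistently with how $\xi'$ was defined earlier in the paper.
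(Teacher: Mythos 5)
Your proof is correct and follows essentially the same route as the paper: both arguments reduce to bounding $\operatorname{trace}[CC^\top]=\sum_{i,j}C_{i,j}^2$ by $n$ using nonnegativity together with the column-sum condition (the paper uses the entrywise bound $C_{i,j}^2\leq C_{i,j}$ since each entry lies in $[0,1]$, while you use the column-wise bound $\sum_i C_{i,j}^2\leq\bigl(\sum_i C_{i,j}\bigr)^2=1$ — an immaterial difference). Your additional care in deriving $\operatorname{Var}[\boldsymbol{\xi}]=\sigma^2CC^\top$ and in flagging that $\operatorname{Var}[\xi']$ must be read as $\tfrac1n\operatorname{trace}[\operatorname{Var}[\boldsymbol{\xi}]]$ (since row sums of a column-stochastic matrix need not equal $1$) is consistent with the paper's earlier convention and, if anything, more explicit than the published proof.
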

 \begin{proof}
 In dimension $n$,
 $$
 \text{trace}\big[CC^\top\big]=\sum_{i,j=1}^n C_{i,j}^2 \leq \sum_{i,j=1}^n C_{i,j} = \sum_{i=1}^n \left(\sum_{j=1}^n C_{i,j} \right)=\sum_{i=1}^n 1=n.
 $$
 \end{proof}

Thus, the weaker ordering we introduced is only interesting if we introduce the ordering on the variance, that is:

\begin{definition}[Weak Ordering of Linear risk sharing schemes (2)]
Consider two linear risk sharing schemes $\boldsymbol{\xi}_1$ and $\boldsymbol{\xi}_2$ of $\boldsymbol{X}$. $\boldsymbol{\xi}_1$ weakly dominates $\boldsymbol{\xi}_2$, denoted $
\boldsymbol{\xi}_{2}\prec_{WBCX}\boldsymbol{\xi}_{1}$ if and only if there is a column-stochastic $n\times n$-matrix $C$,  such that $
\boldsymbol{\xi}_{2}=C\boldsymbol{\xi}_{1}$ and such that $\text{trace}[\text{Var}[\boldsymbol{\xi}_{2}]]\leq \text{trace}[\text{Var}[\boldsymbol{\xi}_{1}]]$.
\end{definition}

The weaker order, $\text{trace}[\text{Var}[\boldsymbol{\xi}_{2}]]\leq \text{trace}[\text{Var}[\boldsymbol{\xi}_{1}]]$, means that not necessarily  all agents prefer scheme $2$ over $1$, individually, but {\em globally}, scheme $2$ is preferred. The consequences thereof will be discussed in the subsequent section as well. A natural case would be a group that prefers the social optimum, as discussed in the introduction, a social welfare programme.

\section{Nonlinear and truncated risk sharing}\label{sec:nonlinear}

\begin{comment}
ToDo:
\begin{itemize}
    \item Reconsider the introduction, maybe too long
    \item Decide on what to move to the appendix
    \item Also make the link to \cite{achman2001insuring}
    \item Emphasise paradigm change where in traditional OR research sparsity is often the goal but in insurance that will actually change (\cite{torri2018robust} - sparse banking networks, here it is inference, \cite{ghaddar2019power} - sparse power networks)
\end{itemize}
\end{comment}

In the previous section we discussed the general case of a P2P insurance, this section extends the framework to the specific nature of the P2P markets. The conditions for P2P products often arise from a failure of the traditional providers. For example, in P2P lending, the market for loans is characterized trough mini or micro lending (c.f. \cite{iyer2016screening}) for endeavours that are not funded by traditional financial intermediaries. In the non-P2P market, some low-risk borrowers are dropping out of the market due to the demand for high collateral (see eg. \cite{liu2020social}) and have the need to be serviced by the corresponding P2P market. Likewise, in the insurance industry, uncertainty about insurees can drive insurers to demand either high premia or high deductibles which might lead to market failures. We first discuss the implications of deductibles specifically before proposing a P2P mechanism that works in such settings. 

Deductibles are also a popular technique insurance companies use to share costs with policyholders when they claim a loss, in order to reduce moral hazards and present lower premia. Insurers expect that deductibles help mitigate the behavioral risk of moral hazards, meaning that either policyholder may not act in good faith, or that they may engage in risky behavior without having to suffer the financial consequences. From an actuarial and statistical perspective, given a loss $y_i$ for a policyholder, and a deductible $s$, we can write
$$
y_i = \underbrace{\min\{y_i,s\}}_{\text{policyholder}}+\underbrace{(y_i-s)_+}_{\text{insurer}}=\begin{cases}
y_i + 0\text{, if }y_i\leq s\\
s + (y_i-s)\text{, if }y_i\geq s\\ 
\end{cases}
$$
where we use $s$ for {\em self insurance}. In the context of asymmetric information, \cite{cohen2007estimating} mention that if the policyholder can chose the level of the deductible, it might be a valid measure of the underlying risk, even if it is not possible to distinguish risk aversion and the true level of the risk (known by the policyholder, and un-observable for the insurance company, without additional assumption).

One might assume that if $y_i\leq s$, some policyholders may not report the loss, since the insurance company will not repay anything (and in a no-claim bonus systems, there are strong incentives not to declare any small claim, if they don't involve third party, as discussed in \cite{arthurRisks}). 

From an economic perspective, it is usually assumed that without deductibles, some insured could be tempted to damage their own property, or act recklessly, leading to moral hazard, as discussed intensively in \cite{eeckhoudt1991increases}, \cite{meyer1999analyzing},  \cite{Halek} and in the survey by \cite{winter2000optimal}. In the context of asymmetric information, \cite{cohen2007estimating} mention that if the policyholder can chose the level of the deductible, it might be a valid measure of the underlying risk, even if it is not possible to distinguish risk aversion and the true level of the risk (known by the policyholder, and un-observable for the insurance company, without additional assumption). Finally, \cite{dionne2001deductible} also mention fraud as the most important motivation for insurance companies to introduce deductibles. Following the seminal work papers of \cite{Duarte}, \cite{Larrimore} or \cite{XuFraud2015}, trust is a key issue in peer-to-peer risk sharing, that might yield two contradictory behaviors. On the one hand, \cite{Artis} shows that known relatives and friends might fraud together; on the other hand, \cite{Albrecher} claims that fraudulent claims should decrease, and \cite{Paperno} claims that bad-faith practices can be significantly mitigated by implementing a peer-to-peer (classical anti-fraud measures being {\em costly and hostile}).

But even if there are many theoretical justifications for introducing deductibles, they can be an important financial burden for policyholders. Since the function $s\mapsto \mathbb{E}\big[(Y-s)_+\big]$ is decreasing, the higher the deductible $s$, the smaller the premium. This again is similar to the case where a high collateral is demanded for a loan. In what follows we will study such a case and propose a P2P mechanism that acts as a complement to traditional insurance with deductibles, akin in spirit to what \cite{liu2020social} studied for P2P lending.

We consider the case where any insured can purchase an insurance contract with a fixed deductible $s$, so that the random wealth of insured $i$ at the end of the year is:
$$
X_i = Z_i\cdot\min\{s,Y_i\} = \begin{cases}
0 \text{ if no claim occurred }(Z_i=0)\\
\min\{s,Y_i\} \text{ if a claim occurred }(Z_i=1)\\
\end{cases}
$$

The mechanism that we study is then based on the following idea:  policyholders purchase insurance contracts with a deductible $s$ (that will be less expensive than having no deductible), and they consider a first layer of collaborative insurance, with peers (or friends), as in Figure \ref{fig:layers}, with some possible self retention. If this layer is a risk sharing among homogeneous peers, this will have no additional cost for the insured, but it can lower individual uncertainty. If the P2P mechanism is based on a network of friends, this can be linked to the issues (1) and (2) from the introduction. Due to the homophily property, it seems reasonable to assume that most connections possess a similar risk profile and the social collateral should enable the generation of trust between the peers and make fraud less likely.

\begin{figure}[!ht]
    \centering
    
    \begin{tikzpicture}[x=0.75pt,y=0.75pt,yscale=-1,xscale=1]
%uncomment if require: \path (0,454); %set diagram left start at 0, and has height of 454

%Straight Lines [id:da20512750599232943] 
\draw  [dash pattern={on 0.84pt off 2.51pt}]  (270,61.5) -- (270,169.5) ;
%Straight Lines [id:da7859891719643155] 
\draw    (29.5,30) -- (554.5,29) ;
\draw [shift={(556.5,29)}, rotate = 179.89] [color={rgb, 255:red, 0; green, 0; blue, 0 }  ][line width=0.75]    (10.93,-3.29) .. controls (6.95,-1.4) and (3.31,-0.3) .. (0,0) .. controls (3.31,0.3) and (6.95,1.4) .. (10.93,3.29)   ;
%Straight Lines [id:da7706167083329676] 
\draw    (29.5,22) -- (29.5,40) ;
%Shape: Rectangle [id:dp04164123853970292] 
\draw  [fill={rgb, 255:red, 190; green, 190; blue, 190 }  ,fill opacity=1 ] (29.5,77) -- (269,77) -- (269,111) -- (29.5,111) -- cycle ;
%Shape: Rectangle [id:dp14684103917007585] 
\draw  [ pattern color={rgb, 255:red, 0; green, 0; blue, 0}] (269.5,77) -- (449.5,77) -- (449.5,111) -- (269.5,111) -- cycle ;
%Shape: Rectangle [id:dp06081153484240098] 
\draw  [fill={rgb, 255:red, 190; green, 190; blue, 190 }  ,fill opacity=1 ] (449,77) -- (551.5,77) -- (551.5,111) -- (449,111) -- cycle ;
%Shape: Rectangle [id:dp13227212981022163] 
\draw  [fill={rgb, 255:red, 190; green, 190; blue, 190 }  ,fill opacity=1 ] (29.5,125) -- (131.5,125) -- (131.5,159) -- (29.5,159) -- cycle ;
%Shape: Rectangle [id:dp7316361167492595] 
\draw  [ pattern color={rgb, 255:red, 0; green, 0; blue, 0}] (268,125) -- (449,125) -- (449,159) -- (268,159) -- cycle ;
%Shape: Rectangle [id:dp6440582566162485] 
\draw  [fill={rgb, 255:red, 190; green, 190; blue, 190 }  ,fill opacity=1 ] (449,125) -- (551.5,125) -- (551.5,159) -- (449,159) -- cycle ;
%Straight Lines [id:da28744580233133066] 
\draw    (472.5,54) -- (450.5,54) ;
\draw [shift={(448.5,54)}, rotate = 360] [color={rgb, 255:red, 0; green, 0; blue, 0 }  ][line width=0.75]    (10.93,-3.29) .. controls (6.95,-1.4) and (3.31,-0.3) .. (0,0) .. controls (3.31,0.3) and (6.95,1.4) .. (10.93,3.29)   ;
%Straight Lines [id:da21596315996894822] 
\draw    (535.5,53) -- (548.5,53) ;
\draw [shift={(550.5,53)}, rotate = 180] [color={rgb, 255:red, 0; green, 0; blue, 0 }  ][line width=0.75]    (10.93,-3.29) .. controls (6.95,-1.4) and (3.31,-0.3) .. (0,0) .. controls (3.31,0.3) and (6.95,1.4) .. (10.93,3.29)   ;
%Shape: Rectangle [id:dp562965027407859] 
\draw  [fill={rgb, 255:red, 74; green, 74; blue, 74 }  ,fill opacity=1 ] (132.5,125) -- (270,125) -- (270,159) -- (132.5,159) -- cycle ;
%Straight Lines [id:da5620785716437715] 
\draw  [dash pattern={on 0.84pt off 2.51pt}]  (448.63,60.5) -- (449.31,178) ;
%Straight Lines [id:da17293470739480998] 
\draw    (29,60.5) -- (270,60.5) ;
%Straight Lines [id:da08895741778340605] 
\draw    (29.5,52) -- (29.5,70) ;
%Straight Lines [id:da22735174361666566] 
\draw    (270,51) -- (270,69) ;

% Text Node
\draw (256,9) node [anchor=north west][inner sep=0.75pt]   [align=left] {claim amount};
% Text Node
\draw (2,84.4) node [anchor=north west][inner sep=0.75pt]    {$( a)$};
% Text Node
\draw (1,132.4) node [anchor=north west][inner sep=0.75pt]    {$( b)$};
% Text Node
\draw (104,86) node [anchor=north west][inner sep=0.75pt]  [color={rgb, 255:red, 255; green, 255; blue, 255 }  ,opacity=1 ] [align=left] {self insurance};
% Text Node
\draw (293,85) node [anchor=north west][inner sep=0.75pt][color={rgb, 255:red, 255; green, 255; blue, 255 }  ,opacity=1 ]   [align=left] {traditional insurance};
% Text Node
\draw (464,85) node [anchor=north west][inner sep=0.75pt]  [color={rgb, 255:red, 255; green, 255; blue, 255 }  ,opacity=1 ] [align=left] {reinsurance};
% Text Node
\draw (33,134) node [anchor=north west][inner sep=0.75pt]  [color={rgb, 255:red, 255; green, 255; blue, 255 }  ,opacity=1 ] [align=left] {self insurance};
% Text Node
\draw (293,133) node [anchor=north west][inner sep=0.75pt][color={rgb, 255:red, 255; green, 255; blue, 255 }  ,opacity=1 ] [align=left] {traditional insurance};
% Text Node
\draw (465,133) node [anchor=north west][inner sep=0.75pt]  [color={rgb, 255:red, 255; green, 255; blue, 255 }  ,opacity=1 ] [align=left] {reinsurance};
% Text Node
\draw (119,39) node [anchor=north west][inner sep=0.75pt]   [align=left] {deductible};
% Text Node
\draw (470,34) node [anchor=north west][inner sep=0.75pt]   [align=left] {\begin{minipage}[lt]{44.68pt}\setlength\topsep{0pt}
\begin{center}
\textit{excess}
\end{center}
\textit{coverage}
\end{minipage}};
% Text Node
\draw (188,133) node [anchor=north west][inner sep=0.75pt]  [color={rgb, 255:red, 255; green, 255; blue, 255 }  ,opacity=1 ] [align=left] {\textcolor[rgb]{1,1,1}{p2p}};

\end{tikzpicture}
    \caption{
    $(a)$ is the standard 3 layer scheme of insurance coverage,
     with self insurance up to some deductible, then traditional insurance up to some upper limit, and some excess coverage provided by the reinsurance market. $(b)$ is the scheme we study in this section, with some possible first layer of self-insurance, then a peer-to-peer layer is introduced, between self insurance and traditional insurance. With our design, on a regular network, a full coverage of that second layer is possible for all participants as suggested in previous research. Above the deductible, claims are paid using traditional insurance which introduces the truncation. 
    }
    \label{fig:layers}
\end{figure}
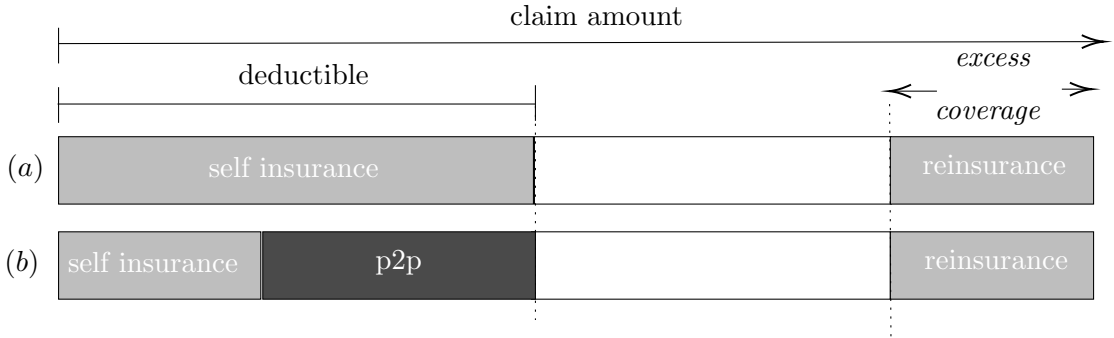

\begin{comment}
\cite{dionne2001deductible} also mention fraud as the most important motivation for insurance companies to introduce deductibles. Following the seminal work papers of \cite{Duarte}, \cite{Larrimore} or \cite{XuFraud2015}, trust is a key issue in peer-to-peer risk sharing, that might yield two contradictory behaviors. On the one hand, \cite{Artis} shows that known relatives and friends might fraud together; on the other hand, \cite{Albrecher} claims that fraudulent claims should decrease, and \cite{Paperno} claims that bad-faith practices can be significantly mitigated by implementing a peer-to-peer (classical anti-fraud measures being {\em costly and hostile}).

\end{comment}

The issue with considering the amount below the deductible is that it essentially applies truncation to the risks, hence the convex orderings derived above might not hold. To extend our framework to this case we need to analyse such a situation numerically. A tractable way to compare risks will simply be based on the variances of risk sharing mechanism. For that we consider the following definition:

\begin{definition}[Weak Ordering of Nonlinear risk sharing schemes (1)]\label{variance:def:order}
Consider two risk sharing schemes $\boldsymbol{\xi}_1$ and $\boldsymbol{\xi}_2$ of $\boldsymbol{X}$. $\boldsymbol{\xi}_1$ weakly dominates $\boldsymbol{\xi}_2$ if $\text{trace}[\text{Var}[\boldsymbol{\xi}_{2}]]\leq \text{trace}[\text{Var}[\boldsymbol{\xi}_{1}]]$.
\end{definition}

The results from the previous sections are valid for an arbitrary adjacency matrix, our simulations will hence need to cover a wide range of possible networks to ensure comparability. We first describe our simulation process before introducing the mechanism that optimizes the non-linear case.

\begin{comment}

The convex orders derived in the previous section. 

Given that often a deductible $s$ is introduced in P2P insurance schemes to cover large losses and total contributions of a single agent tend to be capped, the risk sharing scheme becomes truncated. Hence, the convex orders derived above might not hold in every case. To extend our framework we therefore analyse such a situation numerically. A tractable way to compare risks will simply be based on the variances of risk sharing mechanism. For that we consider the following definition:

\begin{definition}[Weak Ordering of Nonlinear risk sharing schemes (4)]\label{variance:def:order}
Consider two risk sharing schemes $\boldsymbol{\xi}_1$ and $\boldsymbol{\xi}_2$ of $\boldsymbol{X}$. $\boldsymbol{\xi}_1$ weakly dominates $\boldsymbol{\xi}_2$ if $\text{trace}[\text{Var}[\boldsymbol{\xi}_{2}]]\leq \text{trace}[\text{Var}[\boldsymbol{\xi}_{1}]]$.
\end{definition}
\end{comment}

\subsection{Simulation setup}

Whereas the results from the previous section are valid for any adjacency matrix 
$A$, we need to analyse different {\em random} networks to cover a wide range of possible networks. In the graph literature, there are different approaches to simulate such networks. For example, random graph based on the classical Erd\H{o}s–Rényi (see \cite{ER59}), where each edge $\{i,j\}, i\neq j$ in $\mathcal{V}\times \mathcal{V}$ is included in the network with the same probability $p$, independently from every other edge. Then $\boldsymbol{d}$ has a binomial distribution $\mathcal{B}(n-1,p)$ which can be approximated by a Poisson distribution $\mathcal{P}(p/(n-1))$, when $n$ is large, and $p$ is not too large (note that $p$ may depend on $n$). In this case,  $\text{Var}[\boldsymbol{d}]\sim\mathbb{E}[\boldsymbol{d}]$. While this model still is interesting to study, the issue with it is that most real world examples of (social) networks have degree vectors who's distribution resembles a power law, i.e., a distribution with very thick tails. Other networks simulation methods such as the preferential attachment method \cite{Barabasi_albert_1999} can generate a degree vector $\boldsymbol{d}$ such that the distribution of the degrees follows a power law, as it creates hubs that seem common in social networks. This leads to the method having difficulties creating regular networks though. Instead of opting for a single graph generating algorithm, we propose as simulation mechanism that can handle a variety of different cases. First, draw a random degree vector according to:
{ %\color{red} @Arthur - c'est bien $d_i$ et pas $\boldsymbol{d}$
\begin{equation}\label{sim:distribution}
    d_i \overset{\mathcal{L}}{=}\min\{5+[\Delta],(n-1)\},~ \forall i,
\end{equation}
}
where $\Delta$ follows a rounded and shifted gamma distribution with mean $\mu - 5$ and variance $\sigma^2$. The shift and minimum condition ensures that every node has at least five connections and no multiple edges exist. The advantage of this approach is that by changing the parameter $\sigma$, degree distributions that resemble a Poisson distribution can be generated just as well as ones that resemble an exponential distribution (i.e.\ models generated by the Erd\H{o}s–Rényi model or the preferential attachment method, respectively). This allows to test the proposed framework in a variety of settings by changing one parameter only. In the extreme case this will result in graph building a regular mesh on the left hand side ($\sigma=0$) or a power-law degree distribution graph ($\sigma \gg 0$). 

\begin{figure}[!ht]
    \centering
         \includegraphics[width=.7\textwidth]{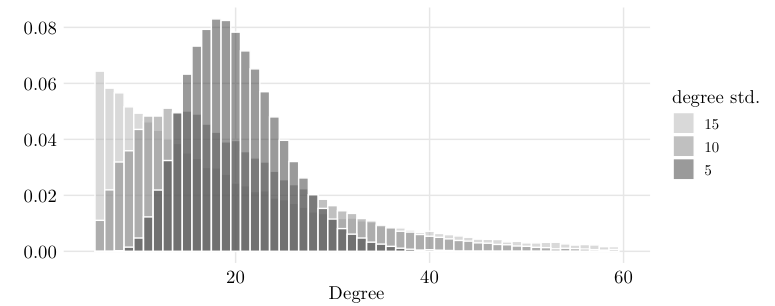} 
    \caption{Distribution of $\boldsymbol{d}$, when $\overline{d}=20$, with three different standard-deviations: 5, 10 and 15. Such distributions will be used in the numerical applications. Note that the support of $\boldsymbol{d}$ is here $\{5,6,\cdots,1000\}$, where rounded and shifted Gamma distributions were used.}
    \label{fig:dist:deg}
\end{figure}

Figure \ref{fig:dist:deg} visualises the degree distribution from three networks generated from equation \ref{sim:distribution} with a constant mean but changing standard deviation.

For the simulations we assume an average degree $\overline{{d}}=\mu - 5 = 20$ and vary the degree standard deviation $\sigma$ from $0$ to $4\overline{{d}}$ which should cover most cases of graphs observed in real life. {%\color{red} 
Once the degree vector is simulated, whereby only vectors with $\sum_i d_i$ even are considered, we construct a network using the Havel–Hakimi algorithm that ensures that no self-loops or multiple edges are present in the generated network}. Once the network is generated, we simulate the loss occurrence (here $Z$) and its corresponding severity $\tilde{Y}$, which is then capped above by a deductible which we set as $s=1000$ and denoted by $Y$. Finally, as in the preceding sections, we assume that risk between two nodes $i,j$ is only shared via the network edges, that is risk is only shared if $\{i,j\}\in \mathcal{E}$. We summarise this in the following hypothesis:

\begin{hypothesis}\label{hyp0}
Given a network $(\mathcal{V},\mathcal{E})$, policyholder $i$ will agree to share risks with all policyholders $j$ such that $\{i,j\}\in\mathcal{E}$ (called $i$'s {\em friends}). For each policyholder, the total sum collected from the friends cannot exceed the value of the actual loss.
\end{hypothesis}

\subsection{Sharing risk with friends}

We begin by studying the simplest case, where the risks are i.i.d.\ across the whole network. Traditionally, the solution that minimizes the risk would be to share the risk across all participants, in a pooled case this would amount to every agent assuming a share of $\frac{1}{n}$. In the case of sharing on networks we will show that such an approach would lead to sub-optimal outcomes once we depart from regular networks. For that, consider a regular network with $\overline{{d}}=20$ and $\text{Var}[\boldsymbol{d}]=0$, where risk is shared via reciprocal commitments as would be the case in the traditional pooled case. Sharing  the risk will all friends results in a weight $\gamma = s/\overline{d}$ which is the network equivalent of the $\frac{1}{n}$-rule. We also impose the condition that no-one should profit from a loss, hence the following definition of a reciprocal contract:

\begin{definition}\label{def:reciprocal:1}
A reciprocal contract between two policyholders, $i$ and $j$, with magnitude $\gamma$ implies that $i$ will pay an amount $C_{i\rightarrow j}$ to $j$ if $j$ claims a loss, with $C_{i\rightarrow j}\in[0,\gamma]$, and where all friends who signed a contract with $j$ should pay the same amount, and conversely from $j$ to $i$. Thus, $C_{i\rightarrow j}$ can be denoted $C_j$, and for a loss $y_j$,
$$
C_j=C_{i\rightarrow j}= \min\left\lbrace\gamma,\frac{\min\{s,y_j\}}{d_j}\right\rbrace=\min\left\lbrace\gamma,\frac{x_j}{d_j}\right\rbrace,~\forall i\in\mathcal{V}_j,
$$
where $x_i=\min\{y_i,s\}$ is the value of the loss that will be shared among friends (up to limit $\gamma$).
\end{definition}

Let $\mathcal{V}_i$ denote the set of friends from node $i$. If $i$ claims a loss, its connections will each pay a share up to $\gamma$ towards the costs and conversely, for every $j \in \mathcal{V}_i$, $i$ will pay a share in case they claim a loss. This results in the following risk sharing:
\begin{equation}\label{eq:rs1}
\xi_i = Z_i\cdot\min\{s,Y_i\} +\sum_{j\in \mathcal{V}_i} Z_j \min\left\{\gamma,\frac{\min\{s,Y_j\}}{d_j}\right\}-Z_i\cdot \min\{d_i\gamma,\min\{s,Y_i\}\}
\end{equation}
The first term is the part of the loss below the deductible $s$, because of the claim experienced by insured $i$, $X_i=Z_i\cdot\min\{s,Y_i\}$. The third term is a gain, in case $Z_i=1$ because all connections will give money, where all friends will contribute by paying $C_i$: insured $i$ cannot receive more than the loss $X_i$, that cannot be smaller than sum of all contributions $d_i\gamma$. Note that this is a risk sharing principle

\begin{proposition}
The process described by Equation (\ref{eq:rs1}) is a risk sharing principle.
\end{proposition}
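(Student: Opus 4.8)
The plan is to verify the two requirements in Definition~\ref{def:risksharing}: that $\boldsymbol\xi=(\xi_1,\dots,\xi_n)$ is $\mathbb{R}_+^n$-valued, and that $\xi_1+\cdots+\xi_n=X_1+\cdots+X_n$ almost surely, where $X_i=Z_i\cdot\min\{s,Y_i\}$. Throughout I would write $x_i=\min\{s,Y_i\}$ and $C_i=\min\{\gamma,x_i/d_i\}$ as in Definition~\ref{def:reciprocal:1}, and record the elementary identity $\min\{d_i\gamma,x_i\}=d_i\min\{\gamma,x_i/d_i\}=d_iC_i$, so that Equation~(\ref{eq:rs1}) can be rewritten as $\xi_i=Z_ix_i+\sum_{j\in\mathcal V_i}Z_jC_j-Z_id_iC_i$.

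For nonnegativity I would group the first and last terms: $Z_ix_i-Z_id_iC_i=Z_i\bigl(x_i-\min\{d_i\gamma,x_i\}\bigr)=Z_i\,(x_i-d_i\gamma)_+\ge 0$, since $\min\{d_i\gamma,x_i\}\le x_i$; meanwhile $\sum_{j\in\mathcal V_i}Z_jC_j$ is a sum of nonnegative terms. Hence $\xi_i\ge 0$ for every realization of $(\boldsymbol Z,\boldsymbol Y)$.

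For the conservation identity I would sum Equation~(\ref{eq:rs1}) over $i$ and handle the three resulting sums separately. The first is $\sum_i X_i$ and the third is $\sum_i Z_id_iC_i$. For the middle sum I would interchange the order of summation: since the graph is undirected, $j\in\mathcal V_i$ if and only if $i\in\mathcal V_j$, so for each fixed $j$ the term $Z_jC_j$ is counted once for every $i\in\mathcal V_j$, i.e.\ exactly $|\mathcal V_j|=d_j$ times, giving $\sum_i\sum_{j\in\mathcal V_i}Z_jC_j=\sum_j d_jZ_jC_j$. This equals the third sum (after renaming the index), so the two cancel and $\sum_i\xi_i=\sum_i X_i$.

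There is no real obstacle here: the whole argument is pathwise, so the almost-sure statement is immediate, and the only points needing a moment's care are the rewriting $\min\{d_i\gamma,x_i\}=d_iC_i$, which exposes the self-retention contribution as the nonnegative quantity $(x_i-d_i\gamma)_+$, and the double-sum bookkeeping, which relies on the symmetry of the adjacency matrix $A$ (equivalently, the reciprocity of the friendship relation).
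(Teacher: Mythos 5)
Your proof is correct and follows essentially the same route as the paper's: the conservation identity is established by the same interchange of the double sum over edges, using the symmetry of the adjacency matrix so that each $Z_jC_j$ is counted $d_j$ times and cancels the third term. The only difference is that you additionally verify nonnegativity of each $\xi_i$ (via the identity $\min\{d_i\gamma,x_i\}=d_iC_i$ and the rewriting $Z_i(x_i-d_i\gamma)_+\geq 0$), a requirement of Definition~\ref{def:risksharing} that the paper's proof leaves implicit; this is a welcome completeness point but not a different argument.
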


\begin{proof}
With slightly simplified notation $\xi_i = X_i+S_i - Z_id_i C_i$, where $S_i$ is the total amount of money collected {\em by} policyholder $i$, while $Z_id_i C_i$ is the money given {\em to} policyholder $i$. Observe that
$$
\sum_{i=1}^n S_i = \sum_{i=1}^n \sum_{j\in \mathcal{V}_i} Z_j C_j=  \sum_{i=1}^n \sum_{j=1}^n A_{i,j}Z_j C_j=\sum_{j=1}^n \sum_{i=1}^n A_{i,j}Z_j C_j = \sum_{j=1}^n Z_j C_j  \sum_{i=1}^n A_{i,j}=\sum_{j=1}^n Z_j C_j d_j
$$
so
$$
\sum_{i=1}^n \xi_i =\sum_{i=1}^n X_i+S_i - Z_id_i C_i = \sum_{i=1}^n X_i
$$
so the process described by Equation (\ref{eq:rs1}) is a risk sharing principle.
\end{proof}

In the case of regular networks $d_i$ is the same for every agent. But if we depart from that case finding a $\gamma$ that corresponds to the $\frac{1}{n}$ approach in pooled insurance becomes more difficult. Given that the average degree according to \eqref{sim:distribution} does not change, we can consider $\gamma = s/\overline{d}$ even in the non-regular network case, which would correspond to the "average" $\frac{1}{n}$-rule.

We generate the losses by drawing i.i.d. Bernoulli variables with probability $p=1/10$, and claim size such that $Y\overset{\mathcal{L}}{=}100+\tilde{Y}$ where $\tilde{Y}$ has a Gamma distribution, with mean $\mu=900$ and variance $2000^2$, so that $Y$ has mean $1000$ and variance $2000^2$. In case of no insurance below the deductible, this will result in $\mathbb{E}[X]=45.2$ and $\text{stdev}[X]=173$ which corresponds to roughly $4.5\%$ and $17.3\%$ of the deductible respectively. We run the simulations for networks with $\sigma=\{0,1,2,\dots, 80\}$ and present the results visually in Figure \ref{fig:std1}.

\begin{figure}[!ht]
    \centering
    \includegraphics[width=.6\textwidth]{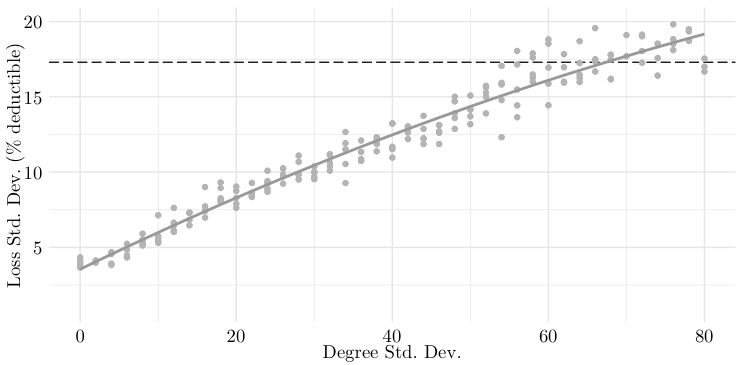}
    
    \vspace{.25cm}
    
    \caption{Evolution of $\text{stdev}[\xi]$ when the standard deviation of $D$ increases from $0$ to $80$ (and $\overline{d}=20$, using the design according to equation \eqref{sim:distribution}), using simulations of random networks of size $n=5000$. The dashed line is the standard deviation of individual losses if no risk sharing is considered. The line with the positive slope represents the calculated loss standard deviation using the mechanism described above}
    \label{fig:std1}
\end{figure}

As can be seen, the risk sharing mechanism reduces the loss standard deviation significantly for rather regular networks with low degree variance (in the extreme case where $\sigma=0$ this results in around $3.9\%\approx 17.3\%/\sqrt{20}$). But on the other extreme, where networks possess a degree vector with a distribution similar to a power law, the simple sharing mechanism actually performs {\em worse} than the no insurance case. This counterintuitive result comes from the concentration stemming from the high degree variance. There will be some nodes connected to almost every other node in the network, which end up with too many connections. Clearly, this is not a desirable risk sharing mechanism. 
%===============================================

\subsection{Optimal reciprocal engagements}\label{sec:optimalrisksharing}

So far, we assumed that all reciprocal contracts were identical. A possibility to improve on that is to assume instead that they can be unique for every edge. In that sense, the main constraint is that some connected agents $i$ and $j$, agree on the same amount of money that should be exchanged given a claim (since contracts are reciprocals), which we will denote $\gamma_{(i,j)}$. We note that for an optimization it would be almost impossible to assume that all agents will individually optimise some criteria, independently of the other. If $i$ is connected to $j$, the commitment $\gamma_{(i,j)}$ should be optimal for $i$ (thus, function of other $\gamma_{(i,\cdot)}$'s) but also optimal for $j$ (thus, function of other $\gamma_{(\cdot,j)}$'s). If there is a path between any two nodes, this would quickly become impossible to solve for a single node. A natural idea will be to consider some global planner (the P2P insurance company), maximizing some overall criterion for a given network instead. This in turn will relate to the weaker orderings that we derived above, here just in the case where the risks are truncated. We also need to extend definition \ref{def:reciprocal:1} to accommodate for this new approach. 

\begin{definition}\label{def:reciprocal:2}
A reciprocal contract between two policyholders, $i$ and $j$, with magnitude $\gamma_{(i,j)}$ implies that $i$ will pay an amount $C_{i\rightarrow j}$ to $j$ if $j$ claims a loss, with $C_{i\rightarrow j}\in[0,\gamma_{(i,j)}]$, and where $C_{k\rightarrow j}$'s (for various $k$'s) should be proportional to the engagement $\gamma_{(k,j)}$, and conversely from $j$ to $i$. Thus, for a loss $y_j$
$$
C_{i\rightarrow j} = \min\left\lbrace\gamma_{(i,j)},\frac{\gamma_{(i,j)}}{\overline{\gamma}_j}\cdot \min\{s,y_j\}\right\rbrace,\text{ where }\overline{\gamma}_j=\sum_{k\in \mathcal{V}_j}\gamma_{(k,j)}
$$
$\boldsymbol{\gamma}=\{\gamma_{(i,j)},~(i,j)\in\mathcal{E}\}$ is the collection of magnitudes.
\end{definition}

If we try to optimize this problem, it seems natural that in the high degree variance case not every node can share all of its risk. Hence a social planner could try to optimize the {\em overall} coverage within the network. This has the advantage that it is solvable by a linear program. For example, the social planner can maximize the total contributions across the networks, but with the constraint that no-single agent needs to take on more than the expected share. That is, the sum of the reciprocal engagements should not exceed the insurance need, as defined by the deductible. In the spirit of small contributions, we can also limit the amount $\gamma$ of an individual contract. A natural bound for this will be $\gamma = s/\overline{d}$, such that on average a agent can fully insure itself. Formally, we consider the following linear programming problem,
\begin{equation}\label{eq:LP}
\begin{cases}
\max\left\lbrace\displaystyle{\sum_{(i,j)\in\mathcal{E}}\gamma_{(i,j)}}\right\rbrace\\
\text{s.t. }\gamma_{(i,j)}\in[0,\gamma],~\forall (i,j)\in\mathcal{E}\\
\hspace{.65cm}\displaystyle{\sum_{j\in\mathcal{V}_i}}\gamma_{(i,j)}\leq s,~\forall i\in\mathcal{V}\\
\end{cases}
\end{equation}
where $\gamma$ is some global limit on the amount of a single contract and $s$ is the deductible of the insurance contract and assumed to be the same across all nodes, so that the second constraint is simply related to Hypothesis \ref{hyp0}.
With classical linear programming notations, we want to find $\boldsymbol{z}^\star\in\mathbb{R}_+^m$ where $m$ is the number of edges in the network (we consider edges $(i,j)$, with $i<j$), so that
$$
\begin{cases}
\boldsymbol{z}^\star =\displaystyle{\underset{\boldsymbol{z}\in\mathbb{R}_+^m}{\text{argmax}}\left\lbrace\boldsymbol{1}^\top\boldsymbol{z}\right\rbrace}\\
\text{s.t. }\boldsymbol{A}^\top\boldsymbol{z}\leq \boldsymbol{a}
\end{cases} 
$$
when $\boldsymbol{A}$ is a $(n+m)\times m$ matrix, and $\boldsymbol{a}$ is a $(n+m)$ vector,
$$
\boldsymbol{A}=\left[\begin{matrix}
\boldsymbol{T} \\ \mathbb{I}_m
\end{matrix}\right]\text{ and }\boldsymbol{a}=\left[\begin{matrix}
\gamma\boldsymbol{1}_n \\ s\boldsymbol{1}_m
\end{matrix}\right]
$$
where $\boldsymbol{T}$ is the incidence matrix, $\boldsymbol{T}=[T_{u,w}]$, with $w\in\mathcal{V}$ i.e. $w=(i,j)$, the $T_{u,w}=0$ unless either $w=i$ or $w=j$ (satisfying $\boldsymbol{d}=\boldsymbol{T}\boldsymbol{1}_n$).

Further, as discussed in the supplementary material, it is possible to retain some level of self contribution for every node instead of a sharing the entire risk, this is mainly introduced to achieve ex-post fairness. This self contribution needs to be paid first before any other peers will contribute to a loss, to ensure fairness. Any amount above the self contribution but below the deductible will then be shared pro-rata between the connected peers. Figure \ref{fig:schemepyramid} depicts the insurance scheme in detail.
\begin{figure}[!ht]
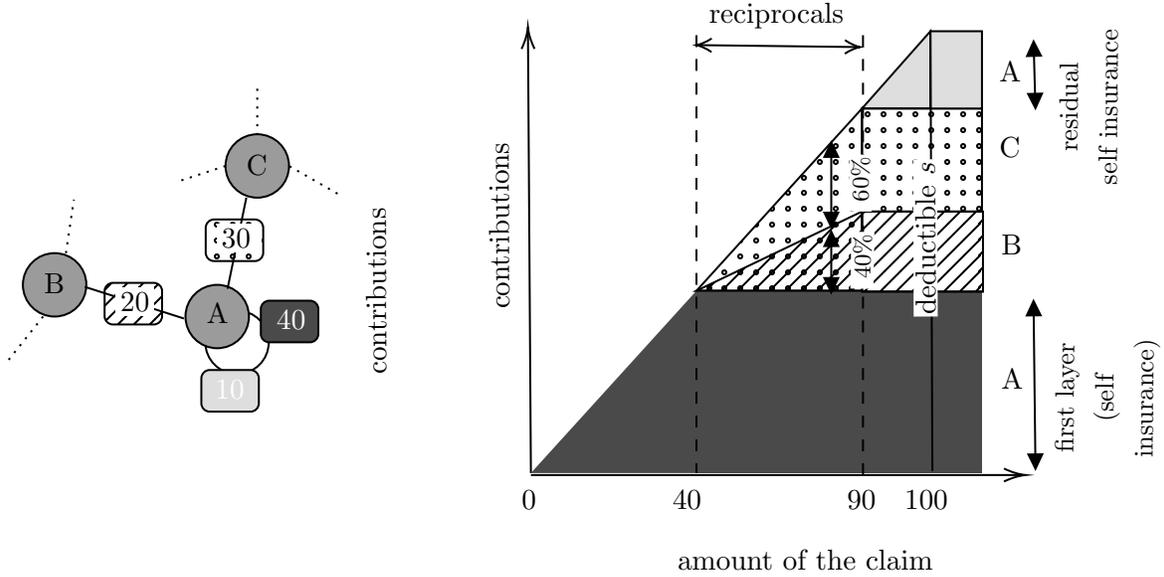

    \centering
    
    \include{tikz/figure_pyramide}
    \caption{
    Insurance scheme for node $A$ in the network left. First, a self contribution will need to be payed. Then the amount between the self insurance and the deductible is shared pro rata between the connected nodes. Finally, if node $A$ cannot share all its risk, there will be some residual self-contribution. This will only need to be payed if the claim exceeds the sum of the self insurance and reciprocal contracts.
    }
    \label{fig:schemepyramid}
\end{figure}

The results of the simulation are summarised graphically in figure \ref{fig:std3}. With optimized reciprocal engagements the curves do not intersect (or even exceed) the no-insurance case. This should be expected as sharing among i.i.d. risks can only decrease the expected variance. It is worth to study the economic framework behind this though.

\begin{figure}[!ht]
    \centering
    \includegraphics[width=.6\textwidth]{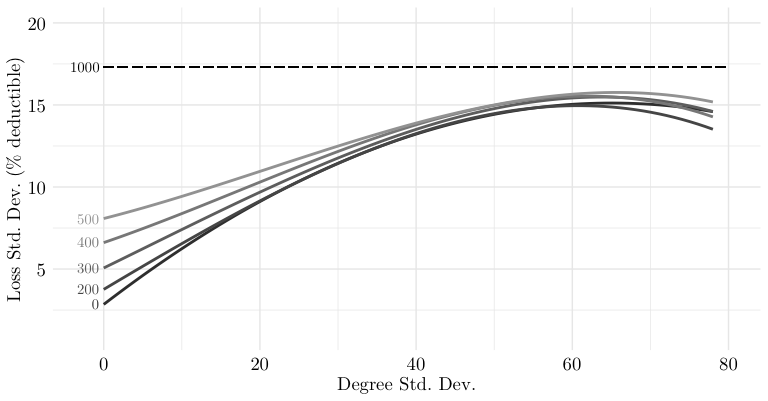}
    \caption{Evolution of $\text{stdev}[\xi^\star]$ when the standard deviation of $D$ increases from $0$ to $80$ (and $\overline{d}=20$), using simulations of random networks of size $n=5000$. Here the deductible is $s=1000$, and various scenarios are considered.} %As in Figure \ref{fig:std2}     \textcolor{red}{Reference missing?}five levels of self contributions are considered, at $0$ (no self-contribution) $200$, $300$, $400$ and $500$.}
    \label{fig:std3}
\end{figure}

Even with the simple linear program, we can consider the social planner to be optimizing the sum of utilities, which results in a Benthamite (or utilitarian) welfare function. Assuming each utility is on the same scale, each agent has a utility linear in the sum of the reciprocal contracts. That is, the preferences are strictly monotone until $s$ and only weakly convex. The issue with optimizing the welfare function this way is that the solution is not unique, nor does it necessarily minimize the variance across the portfolio, even in the i.i.d.\ case. Instead of optimizing only the coverage, we can instead consider optimizing the welfare function using a different utility framework. As the next section shows, this will result in more stable solutions and also allows for heterogeneous and potentially correlated risks but comes at the cost of higher computational complexity.

\subsection{Utility maximizing framework}

Depending on the utility function chosen, the social planners objective needs to be adapted. A simple adaption is to assume a quadratic utility for each agent. In this case, the optimization problem can be solved with a quadratic program, under the same constraints as before. Note here that self-insurance arises naturally if we consider the edges $\{i,i\}$ to be existent for every $i=1,\dots,n$. This has the natural interpretation for concave utilities that agents will prefer to mix up the risk stemming from themselves with that of others to maximize their utility.

\begin{figure}[!ht]
    \centering
    \includegraphics[width=.6\textwidth]{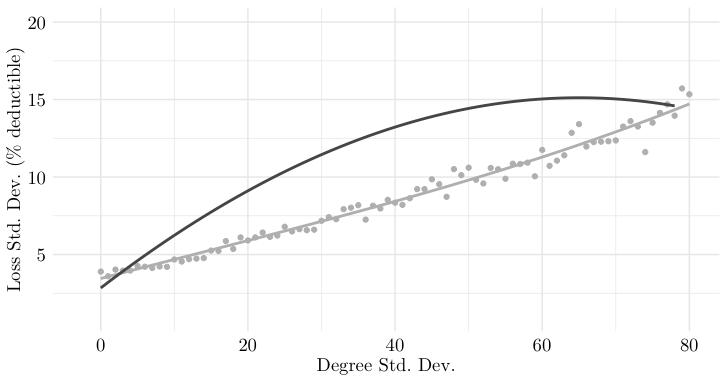}
    \caption{Evolution of $\text{stdev}[\xi^\star]$ when the standard deviation of $D$ increases from $0$ to $80$ (and $\overline{d}=20$) using the quadratic utility maximizing approach in light gray and the results from \ref{fig:std3} as comparison in black} %As in Figure \ref{fig:std2}     \textcolor{red}{Reference missing?}five levels of self contributions are considered, at $0$ (no self-contribution) $200$, $300$, $400$ and $500$.}
    \label{fig:quadratic utility}
\end{figure}

Using the quadratic utility function, we can translate this into a quadratic optimization as follows:

\begin{align*}
    &\min_{w} w^\top Q w \\
    \text{st.}\quad &\sum_{j \notin \mathcal{V}_i} w_{i,j} = 0, \quad \forall i \\
    &\sum_{j \in \mathcal{V}_i} w_{i,j} = 1, \quad \forall i \\
    & w_k \geq 0\quad \forall k \\
    & w_{i,j}s \leq \gamma \quad \forall j \neq i\\
    & w_{i,i} \leq 1
\end{align*}

Where $w$ is the $(\frac{n(n+1)}{2} \times 1)$ vector of (individual) edges, Q is a diagonal matrix of size $(\frac{n(n+1)}{2} \times \frac{n(n+1)}{2})$ where the elements on the diagonal are $1$ if $i=j$ and $2$ otherwise. The solution to this problem guarantees a minimum variance globally. Further, the solution to the optimization also guarantees that the proposed mechanism will work at least as well as the no-insurance case (as the case where $w_{i,i}=1$ is within the support).

Note that using the utility framework would also allow to incorporate heterogeneous risks and possible correlations between the risks. The issue is that they have to be known to the social planner, which is similar to an insurance company having {\em all} relevant risk-factors available. We refrain from this hypothesis in the spirit of P2P products, where often exactly this information is not available (after all - why would the insurance impose a deductible if the risks would be perfectly known). Instead, we focus on the sharing mechanisms themselves and assume relative homophily based on the network properties alluded above. 

\subsection{Extension to lower trust}

The mechanism effectively lowers the individual uncertainty without an additional cost to the agents, but it might still be improved though. Especially in networks that are highly concentrated many nodes are insured to a suboptimal degree. Highly concentrated graphs have the issue that a few nodes are connected to many nodes but {\em most} of the agents are not connected to enough others to share their risk with as many as possible. On the other hand, networks with a higher degree variance can be considered more dense in another way. For example star shaped networks where the degree variance is maximal are used to ensure quick routing between different computer nodes, as all only need to pass the central node to arrive at any other node (note that this central node is conceptually similar to an insurer in the non-P2P case).

Further, just as \cite{lin2013judging} made an explicit link to hierarchy and social collateral with friends in the P2P lending literature, we propose a framework similar to that. Instead of merely sharing risk with friends, it can be possible to share some risk with the friends of a friend (corresponding to all connections with path length 2). Just as in the case for loans, trust (as expressed by social collateral) will be lower between friends-of-friends than with friends. A natural way to deal with this is to limit the maximum contractual values between friends-of-friends to a lower amount. With some self insurance, a new mechanism is depicted in Figure \ref{fig_insured_BCE_self_ff}, where the dashed line is a connection between friends-of-friends. To incorporate this into the linear program above, we state the following hypothesis:
\begin{hypothesis}\label{hyp2}
Consider a policyholder $i$ such that his coverage $\gamma^\star_i$ is strictly smaller than the overall loss limit $s$ then policyholder $i$ will try get connections with friends of friends.  
\end{hypothesis}

The optimization problem then needs to have a definition for friends of friends:

\begin{definition}\label{def:friendsoffriends}
 Given a network $(\mathcal{V}, \mathcal{E})$, with degrees $\boldsymbol{d}$ and some first-level optimization vector $\boldsymbol{\gamma}_1$, define the subset of vertices who could not share their whole deductible as, 
 $$
 \mathcal{V}_{\boldsymbol{\gamma}_1} = \big\lbrace i \in \mathcal{V}: \gamma_i=\sum_{j\in\mathcal{V}}\gamma_{(i,j)} < s \big\rbrace.
 $$
 The $\boldsymbol{\gamma}_1$-sub-network of friends of friends is $(\mathcal{E}_{\boldsymbol{\gamma}_1}^{(2)},\mathcal{V}_{\boldsymbol{\gamma}_1})$,
$$
\mathcal{E}_{\boldsymbol{\gamma}_1}^{(2)} = \big\lbrace(i,j)\in\mathcal{V}_{\boldsymbol{\gamma}_1}\times\mathcal{V}_{\boldsymbol{\gamma}_1}:\exists k\in\mathcal{V}\text{ such that }(i,k)\in\mathcal{E},~(j,k)\in\mathcal{E}\text{ and }i\neq j\big\rbrace
$$
 
\end{definition}

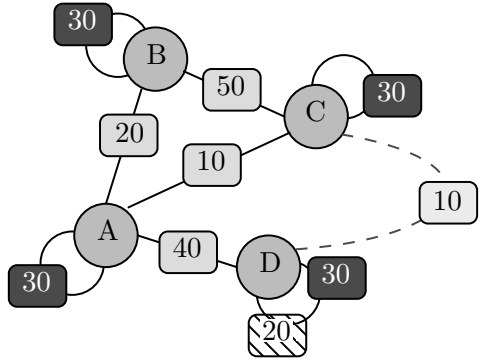
\begin{figure}[!ht]
    \centering
  \tikzset{
pattern size/.store in=\mcSize, 
pattern size = 5pt,
pattern thickness/.store in=\mcThickness, 
pattern thickness = 0.3pt,
pattern radius/.store in=\mcRadius, 
pattern radius = 1pt}
\makeatletter
\pgfutil@ifundefined{pgf@pattern@name@_955ioluz3}{
\pgfdeclarepatternformonly[\mcThickness,\mcSize]{_955ioluz3}
{\pgfqpoint{0pt}{-\mcThickness}}
{\pgfpoint{\mcSize}{\mcSize}}
{\pgfpoint{\mcSize}{\mcSize}}
{
\pgfsetcolor{\tikz@pattern@color}
\pgfsetlinewidth{\mcThickness}
\pgfpathmoveto{\pgfqpoint{0pt}{\mcSize}}
\pgfpathlineto{\pgfpoint{\mcSize+\mcThickness}{-\mcThickness}}
\pgfusepath{stroke}
}}
\makeatother
\tikzset{every picture/.style={line width=0.75pt}} %set default line width to 0.75pt        

\begin{tikzpicture}[x=0.75pt,y=0.75pt,yscale=-1,xscale=1]
%uncomment if require: \path (0,605); %set diagram left start at 0, and has height of 605

%Shape: Circle [id:dp737263028349751] 
\draw   (141.8,185) .. controls (141.8,176.16) and (148.96,169) .. (157.8,169) .. controls (166.64,169) and (173.8,176.16) .. (173.8,185) .. controls (173.8,193.84) and (166.64,201) .. (157.8,201) .. controls (148.96,201) and (141.8,193.84) .. (141.8,185) -- cycle ;
%Shape: Circle [id:dp42195339977830915] 
\draw   (32.8,170) .. controls (32.8,161.16) and (39.96,154) .. (48.8,154) .. controls (57.64,154) and (64.8,161.16) .. (64.8,170) .. controls (64.8,178.84) and (57.64,186) .. (48.8,186) .. controls (39.96,186) and (32.8,178.84) .. (32.8,170) -- cycle ;
%Shape: Circle [id:dp9602693493583863] 
\draw   (169.8,81) .. controls (169.8,72.16) and (176.96,65) .. (185.8,65) .. controls (194.64,65) and (201.8,72.16) .. (201.8,81) .. controls (201.8,89.84) and (194.64,97) .. (185.8,97) .. controls (176.96,97) and (169.8,89.84) .. (169.8,81) -- cycle ;
%Shape: Circle [id:dp23609896266431818] 
\draw   (55.8,60) .. controls (55.8,51.16) and (62.96,44) .. (71.8,44) .. controls (80.64,44) and (87.8,51.16) .. (87.8,60) .. controls (87.8,68.84) and (80.64,76) .. (71.8,76) .. controls (62.96,76) and (55.8,68.84) .. (55.8,60) -- cycle ;
%Shape: Circle [id:dp4235283908364136] 
\draw  [fill={rgb, 255:red, 188; green, 188; blue, 188 }  ,fill opacity=1 ] (155.8,96) .. controls (155.8,87.16) and (162.96,80) .. (171.8,80) .. controls (180.64,80) and (187.8,87.16) .. (187.8,96) .. controls (187.8,104.84) and (180.64,112) .. (171.8,112) .. controls (162.96,112) and (155.8,104.84) .. (155.8,96) -- cycle ;
%Shape: Circle [id:dp32973230092805816] 
\draw  [fill={rgb, 255:red, 188; green, 188; blue, 188 }  ,fill opacity=1 ] (49.8,156) .. controls (49.8,147.16) and (56.96,140) .. (65.8,140) .. controls (74.64,140) and (81.8,147.16) .. (81.8,156) .. controls (81.8,164.84) and (74.64,172) .. (65.8,172) .. controls (56.96,172) and (49.8,164.84) .. (49.8,156) -- cycle ;
%Straight Lines [id:da16786206652627866] 
\draw    (83.8,82) -- (65.8,140) ;
%Shape: Circle [id:dp6804979819088912] 
\draw  [fill={rgb, 255:red, 188; green, 188; blue, 188 }  ,fill opacity=1 ] (74.8,67) .. controls (74.8,58.16) and (81.96,51) .. (90.8,51) .. controls (99.64,51) and (106.8,58.16) .. (106.8,67) .. controls (106.8,75.84) and (99.64,83) .. (90.8,83) .. controls (81.96,83) and (74.8,75.84) .. (74.8,67) -- cycle ;
%Straight Lines [id:da2797304109266483] 
\draw    (155.8,96) -- (104.8,73) ;
%Shape: Circle [id:dp06851148238813964] 
\draw  [fill={rgb, 255:red, 188; green, 188; blue, 188 }  ,fill opacity=1 ] (131.8,172) .. controls (131.8,163.16) and (138.96,156) .. (147.8,156) .. controls (156.64,156) and (163.8,163.16) .. (163.8,172) .. controls (163.8,180.84) and (156.64,188) .. (147.8,188) .. controls (138.96,188) and (131.8,180.84) .. (131.8,172) -- cycle ;
%Straight Lines [id:da1901144267738153] 
\draw    (81.8,156) -- (131.8,172) ;
%Rounded Rect [id:dp7808174969415405] 
\draw  [fill={rgb, 255:red, 220; green, 220; blue, 220 }  ,fill opacity=1 ] (62.8,99.2) .. controls (62.8,96.88) and (64.68,95) .. (67,95) -- (87.6,95) .. controls (89.92,95) and (91.8,96.88) .. (91.8,99.2) -- (91.8,111.8) .. controls (91.8,114.12) and (89.92,116) .. (87.6,116) -- (67,116) .. controls (64.68,116) and (62.8,114.12) .. (62.8,111.8) -- cycle ;
%Rounded Rect [id:dp5197285945839982] 
\draw  [fill={rgb, 255:red, 220; green, 220; blue, 220 }  ,fill opacity=1 ] (114.8,76.2) .. controls (114.8,73.88) and (116.68,72) .. (119,72) -- (139.6,72) .. controls (141.92,72) and (143.8,73.88) .. (143.8,76.2) -- (143.8,88.8) .. controls (143.8,91.12) and (141.92,93) .. (139.6,93) -- (119,93) .. controls (116.68,93) and (114.8,91.12) .. (114.8,88.8) -- cycle ;
%Rounded Rect [id:dp38632728161098995] 
\draw  [fill={rgb, 255:red, 74; green, 74; blue, 74 }  ,fill opacity=1 ] (39.8,43.2) .. controls (39.8,40.88) and (41.68,39) .. (44,39) -- (64.6,39) .. controls (66.92,39) and (68.8,40.88) .. (68.8,43.2) -- (68.8,55.8) .. controls (68.8,58.12) and (66.92,60) .. (64.6,60) -- (44,60) .. controls (41.68,60) and (39.8,58.12) .. (39.8,55.8) -- cycle ;
%Rounded Rect [id:dp782879374688584] 
\draw  [fill={rgb, 255:red, 74; green, 74; blue, 74 }  ,fill opacity=1 ] (16.8,175.2) .. controls (16.8,172.88) and (18.68,171) .. (21,171) -- (41.6,171) .. controls (43.92,171) and (45.8,172.88) .. (45.8,175.2) -- (45.8,187.8) .. controls (45.8,190.12) and (43.92,192) .. (41.6,192) -- (21,192) .. controls (18.68,192) and (16.8,190.12) .. (16.8,187.8) -- cycle ;
%Rounded Rect [id:dp3879620776819974] 
\draw  [pattern=_955ioluz3,pattern size=6pt,pattern thickness=0.75pt,pattern radius=0pt, pattern color={rgb, 255:red, 0; green, 0; blue, 0}] (137.8,200.2) .. controls (137.8,197.88) and (139.68,196) .. (142,196) -- (162.6,196) .. controls (164.92,196) and (166.8,197.88) .. (166.8,200.2) -- (166.8,212.8) .. controls (166.8,215.12) and (164.92,217) .. (162.6,217) -- (142,217) .. controls (139.68,217) and (137.8,215.12) .. (137.8,212.8) -- cycle ;
%Rounded Rect [id:dp9453988490828278] 
\draw  [fill={rgb, 255:red, 220; green, 220; blue, 220 }  ,fill opacity=1 ] (92.8,158.2) .. controls (92.8,155.88) and (94.68,154) .. (97,154) -- (117.6,154) .. controls (119.92,154) and (121.8,155.88) .. (121.8,158.2) -- (121.8,170.8) .. controls (121.8,173.12) and (119.92,175) .. (117.6,175) -- (97,175) .. controls (94.68,175) and (92.8,173.12) .. (92.8,170.8) -- cycle ;
%Rounded Rect [id:dp12897034749159397] 
\draw  [fill={rgb, 255:red, 74; green, 74; blue, 74 }  ,fill opacity=1 ] (195.8,79.2) .. controls (195.8,76.88) and (197.68,75) .. (200,75) -- (220.6,75) .. controls (222.92,75) and (224.8,76.88) .. (224.8,79.2) -- (224.8,91.8) .. controls (224.8,94.12) and (222.92,96) .. (220.6,96) -- (200,96) .. controls (197.68,96) and (195.8,94.12) .. (195.8,91.8) -- cycle ;
%Rounded Rect [id:dp31850978613396963] 
\draw  [fill={rgb, 255:red, 74; green, 74; blue, 74 }  ,fill opacity=1 ] (167.8,171.2) .. controls (167.8,168.88) and (169.68,167) .. (172,167) -- (192.6,167) .. controls (194.92,167) and (196.8,168.88) .. (196.8,171.2) -- (196.8,183.8) .. controls (196.8,186.12) and (194.92,188) .. (192.6,188) -- (172,188) .. controls (169.68,188) and (167.8,186.12) .. (167.8,183.8) -- cycle ;
%Straight Lines [id:da07828150965171088] 
\draw    (76.8,142) -- (158.8,104) ;
%Rounded Rect [id:dp5083470082040547] 
\draw  [fill={rgb, 255:red, 220; green, 220; blue, 220 }  ,fill opacity=1 ] (104.8,114.2) .. controls (104.8,111.88) and (106.68,110) .. (109,110) -- (129.6,110) .. controls (131.92,110) and (133.8,111.88) .. (133.8,114.2) -- (133.8,126.8) .. controls (133.8,129.12) and (131.92,131) .. (129.6,131) -- (109,131) .. controls (106.68,131) and (104.8,129.12) .. (104.8,126.8) -- cycle ;
%Curve Lines [id:da17625400201352193] 
\draw [color={rgb, 255:red, 74; green, 74; blue, 74 }  ,draw opacity=1 ] [dash pattern={on 4.5pt off 4.5pt}]  (161.5,163) .. controls (248.5,157) and (267.5,119) .. (184.5,105) ;
%Rounded Rect [id:dp7801918278110926] 
\draw  [fill={rgb, 255:red, 235; green, 235; blue, 235 }  ,fill opacity=1 ] (224,133.2) .. controls (224,130.88) and (225.88,129) .. (228.2,129) -- (248.8,129) .. controls (251.12,129) and (253,130.88) .. (253,133.2) -- (253,145.8) .. controls (253,148.12) and (251.12,150) .. (248.8,150) -- (228.2,150) .. controls (225.88,150) and (224,148.12) .. (224,145.8) -- cycle ;

% Text Node
\draw (59.8,147) node [anchor=north west][inner sep=0.75pt]   [align=left] {A};
% Text Node
\draw (84.8,58) node [anchor=north west][inner sep=0.75pt]   [align=left] {B};
% Text Node
\draw (164.8,87) node [anchor=north west][inner sep=0.75pt]   [align=left] {C};
% Text Node
\draw (141.8,163) node [anchor=north west][inner sep=0.75pt]   [align=left] {D};
% Text Node
\draw (69,98) node [anchor=north west][inner sep=0.75pt]   [align=left] {20};
% Text Node
\draw (120,74) node [anchor=north west][inner sep=0.75pt]   [align=left] {50};
% Text Node
\draw (45,41) node [anchor=north west][inner sep=0.75pt]  [color={rgb, 255:red, 255; green, 255; blue, 255 }  ,opacity=1 ] [align=left] {30};
% Text Node
\draw (22,173) node [anchor=north west][inner sep=0.75pt][color={rgb, 255:red, 255; green, 255; blue, 255 }  ,opacity=1 ] [align=left] {30};
% Text Node
\draw (143,198) node [anchor=north west][fill=white,inner sep=0.75pt] [align=left] {20};
% Text Node
\draw (98,156) node [anchor=north west][inner sep=0.75pt]   [align=left] {40};
% Text Node
\draw (201,77) node [anchor=north west][inner sep=0.75pt]  [color={rgb, 255:red, 255; green, 255; blue, 255 }  ,opacity=1 ] [align=left] {30};
% Text Node
\draw (173,169) node [anchor=north west][inner sep=0.75pt]  [color={rgb, 255:red, 255; green, 255; blue, 255 }  ,opacity=1 ] [align=left] {30};
% Text Node
\draw (110,112) node [anchor=north west][inner sep=0.75pt]   [align=left] {10};
% Text Node
\draw (229.2,132) node [anchor=north west][inner sep=0.75pt]   [align=left] {10};

\end{tikzpicture}
  \vspace{.5cm}
    \caption{Network with five nodes $\{$A,B,C,D,E$\}$, with some self-contribution, and some possible {\em friends-of-friends} risk sharing (with a lower maximal contribution).}
    \label{fig_insured_BCE_self_ff}
\end{figure}

Again, from Hypothesis \ref{hyp2}, policyholders with enough connections (to recover fully from a loss) will not need to share their risk with additional people. So only people in that sub-network will try to find additional resources through friends of friends in order to fully cover the deductible. Figure \ref{fig:friends5005} depicts the total share of the risk share with either friends, friends-of-friends or retained as residual self insurance. The part shareable with friends decreases with increasing degree variance, whereas all of the risk can be shared with friends in the regular graph case, most of the risk will need to be assumed by an agent itself in the case of a high degree variance graph.

\begin{figure}[!ht]
    \centering
    \includegraphics[width=.6\textwidth]{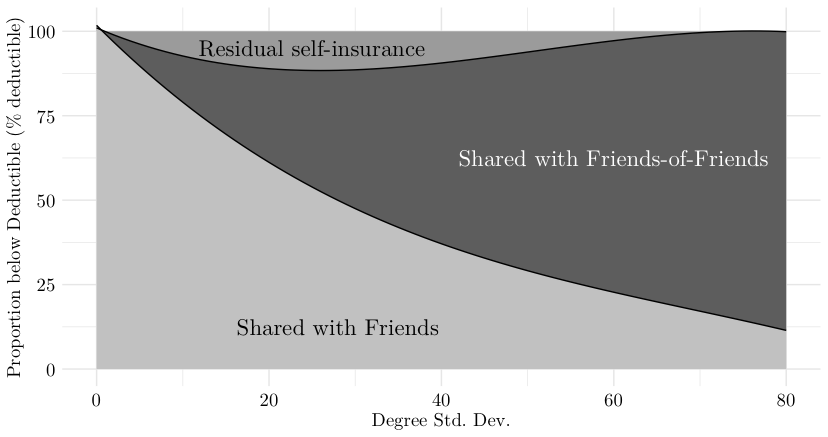}
    \caption{Distribution of the risk for $\gamma_1 = 50$ and $\gamma_2 = 5$, without any initial self-contribution. Note how the total shareable risk first decreases but then again increases again as the share of friends first drops faster than the share of friends-of-friends can increase. As networks with a high degree variance generate ``hubs" (similar to a star shaped network), these hub-nodes can help to connect most of the nodes in the network by passing through these central nodes.}
    \label{fig:friends5005}
\end{figure}

Although the graph in Figure \ref{fig:friends5005} has a second level contribution limit that is only $10\%$ of the first level limit, the relative importance of the second layer becomes larger with increasing degree variance as the more concentrated, high degree variance networks generate "hubs" to facilitate sharing between friends-of-friends. Although computationally more intensive, we can still solve the friends of friend risk sharing mechanism with a linear program, depending on the utility function that is assumed.

\subsection{Optimal friends of friends}

We propose to solve the optimization problem with a two stage approach. The first stage is the same as previously in section \ref{sec:optimalrisksharing}, where the initial contribution among friends, $\gamma_{(i,j)}$ is bounded by $\gamma_1$
$$
\begin{cases}
\gamma_1^\star=\text{argmax}\left\lbrace\displaystyle{\sum_{(i,j)\in\mathcal{E}^{(1)}}\gamma_{(i,j)}}\right\rbrace\\
\text{s.t. }\gamma_{(i,j)}\in[0,\gamma_1],~\forall (i,j)\in\mathcal{E}^{(1)}\\
\hspace{.65cm}\displaystyle{\sum_{j\in\mathcal{V}_i^{(1)}}}\gamma_{(i,j)}\leq s,~\forall i\in\mathcal{V}\\
\end{cases}
$$
In the second stage, $\gamma_{(i,j)}$ is bounded by $\gamma_2$. $\mathcal{E}_{\boldsymbol{\gamma}_1^\star}^{(2)}$ can be obtained by squaring the adjacency matrix and setting nonzero entries to 1 as defined in definition \ref{def:friendsoffriends}. 
$$
\begin{cases}
\gamma_2^\star=\text{argmax}\left\lbrace\displaystyle{\sum_{(i,j)\in\mathcal{E}^{(2)}}\gamma_{(i,j)}}\right\rbrace\\
\text{s.t. }\gamma_{(i,j)}\in[0,\gamma_2],~\forall (i,j)\in\mathcal{E}_{\boldsymbol{\gamma}_1^\star}^{(2)}\\
\hspace{.65cm}\displaystyle{\sum_{j\in\mathcal{V}_i^{(1)}}\gamma^\star_{1:(i,j)}+\sum_{j\in\mathcal{V}^{(2)}_i}}\gamma_{(i,j)}\leq s,~\forall i\in\mathcal{V}\\
\end{cases}
$$

In that way, most of the risk can be shared between peers and the results are again graphically summarised in Figure \ref{fig:std6}. As can be seen, the loss standard deviation is now well below the no-insurance case even for high degree variance networks. Again, this framework could be extended to different utility functions.

\begin{figure}[!ht]
    \centering
    \includegraphics[width=.6\textwidth]{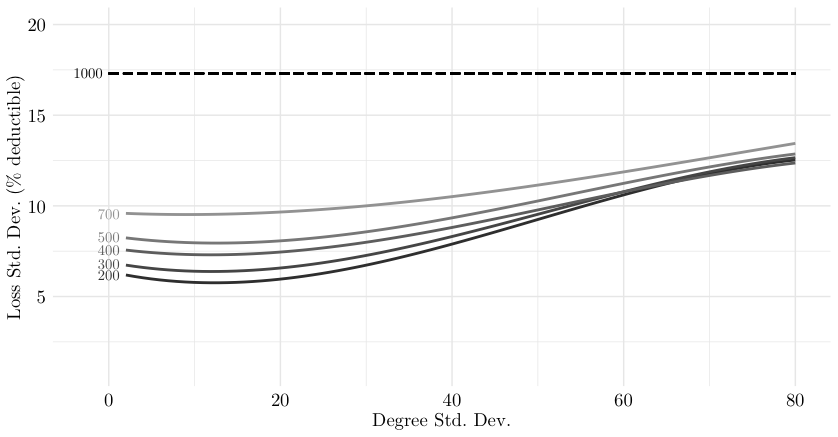}
    \caption{Evolution of $\text{stdev}[\xi]$ when the variance of $\boldsymbol{d}$ increases from $0$ to $80$ (and $\overline{d}=20$), using simulations of random networks of size $n=5000$. Eight levels of self contributions are considered, at $0$ (no self-contribution) $200$,  $300$, $400$, 500 and $700$, when friends, as well as (possibly) friends of friends, can share risks. A self contribution of $1000$ (dashed line) means that no risk is shared.}
    \label{fig:std6}
\end{figure}

\section{Conclusion}\label{sec:conclusion}

Recently, \cite{denuit2020}, \cite{Robert}, \cite{Robert2} and \cite{feng_2020_flood} studied actuarial properties of peer-to-peer insurance mechanisms on networks but restricted their analysis to specific shapes. In this paper, we consider the use of reciprocal contracts, as a technique to share risks with ``{\em friends}''. The notion of friends and social ties required us to study networks of arbitrary shapes that are described by their adjacency matrix. The use of this inspired us to derive mechanisms to compare different linear risk sharing mechanism. We then extended the framework to a situation that poses P2P insurance as a complement rather than a substitute to traditional insurance, similar to what was found to be the case in P2P lending (eg. \cite{tang2019peer}). For that we considered the situation where not the entire risk must be shared but rather the risk associated to a deductible. By assuming the availability of similar contracts with the same deductible for the policyholders, we investigated the shape of the network, or more specifically, the total number of connections (or equivalently the average degree of a node), and the impact of the variability of the degrees. We observed that the more regular the network, the more efficient the risk sharing, in the sense that it will decrease the risks for individuals. This in turn also serves as a guide when traditional insurance is most apt as opposed to the "ancient" form of mutual insurance. We proposed a crude initial mechanism to solve the P2P insurance problem under the hypothesis of the deductible using identical contracts among policyholders. The simplicity of this mechanism allows us to evaluate the dynamics with ease: policyholder might decide to leave, or join by signing new contracts with friends. This behaviour is interesting, but it is clearly sub-optimal, especially when the network is not regular. In that case, we found that there are policyholder with too many friends, who might select those they wish to sign reciprocal contracts with, and policyholders with too few friends.

To overcome that problem, we considered an optimal computation of commitments, contract per contract, from a global planer's perspective (in the sense that we simply want to maximize the overall coverage through those contracts or minimize the risk for agents with a quadratic utility). With this approach, the dynamics of the approach becomes more complex, since it is necessary to compute the optimal values when someone joins or leaves, which yields practical issues. But those contracts clearly lower the risks for policyholder, who can now share risks with their friends. Finally, we considered a possible extension, were additional edges could be considered: policyholders can also agree to share risks with friends of friends. Even if we assume that policyholders might have less interest to share risks with people they don't know, and therefore assume that the commitment will be financially much smaller, we see that it is possible to increase the total coverage, and decrease more the variability of the losses for policyholders. In the study of P2P insurance mechanisms, much remains to be done but the results and techniques derived in this article should serve as a starting point for further exploration, for example on the dynamics of such systems.

%\clearpage

% \textcolor{red}{REFERENCES A RAJOUTER : , \cite{guo2016instance}, \cite{hong2019optimal}, \cite{wang2022credit}, \cite{grechuk2015center}}

\bibliographystyle{apalike}
\bibliography{bibliography}

\section{Appendix}

\subsection{Heuristic interpretation of weaker orderings}

In this appendix, we briefly discuss what $\text{trace}[C\Sigma C^\top]>\text{trace}[\Sigma]$ could mean, in dimension 2, when $C$ is column-stochastic matrix.

\subsection{$\text{trace}[D\Sigma D^\top]$ against $\text{trace}[\Sigma]$ in dimension 2}\label{app:1}

(i) In dimension 2, we can prove that the result holds. There is $x\in[0,1]$ such that
$$
D=\begin{pmatrix}
x& 1-x\\
1-x & x
\end{pmatrix}\text{ and }\operatorname{Var}[\boldsymbol{\xi}_1]=
\begin{pmatrix}
a^2& rab\\
rab & b^2
\end{pmatrix}$$

\begin{align*}
\operatorname{trace}\big[\operatorname{Var}[\boldsymbol{\xi}_2]\big]&=\operatorname{trace}\big[D\operatorname{Var}[\boldsymbol{\xi}_1]D^\top\big] \\ 
& = \text{trace}\left[\begin{pmatrix}
x& 1-x\\
1-x & x
\end{pmatrix}\begin{pmatrix}
a^2& rab\\
rab & b^2
\end{pmatrix}\begin{pmatrix}
x& 1-x\\
1-x & x
\end{pmatrix}^\top\right] \\ 
&= \text{trace}\left[\begin{pmatrix}
x^2a^2+ 2rab(1-x)x +b^2(1-x)^2 ~~~ *\\
* ~~~ (1-x)^2a^2+ 2rab(1-x)x +b^2x^2 
\end{pmatrix}\right] \\
&\leq (a^2+b^2)\big(x^2+(1-x)^2\big)+4abx(1-x)
\end{align*}

This parabolic function (in $x$) is symmetric in $x$ and $(1-x)$, so it is symmetric in $x=1/2$, which is the minimum of that function. The maximum of that parabolic function on the interval $[0,1]$ is obtained either when $x=0$ or $x=1$, and it takes value $a^2+b^2$, so, the trace is lower than
$$
\leq a^2+b^2=\text{trace}\left[\begin{pmatrix}
a^2& rab\\
rab & b^2
\end{pmatrix}\right]
$$
% (ii) If $\boldsymbol{\xi}_1$ consists in independent risks with variance $\sigma^2$, we can prove that the result holds.
% $$
% \text{trace}\big[\text{Var}[\boldsymbol{\xi}_2]\big]=\sigma^2 \text{trace}\big[DD^\top\big]\leq \sigma^2n
% $$
% since $\text{trace}\big[DD^\top\big]$ has positive eigenvalues, and the largest one is 1. So the trace is bounded by $n$, and therefore $\text{Var}[\xi'_2]\leq\text{Var}[\xi_1']=\sigma^2$.

% (iii) If $\boldsymbol{\xi}_1$ consists in independent risks, matrix $\operatorname{Var}[\boldsymbol{\xi}_1$ is diagonal,
% $$
% \operatorname{trace}\big[D\operatorname{Var}[\boldsymbol{\xi}_1]D^\top\big]=\sum_{i=1}^n D_i^T\top \operatorname{Var}[\boldsymbol{\xi}_1] D_i=\sum_{i=1}^n \sigma_i^2 D_{i,i}^2\leq \sum_{i=1}^n \sigma_i^2=\operatorname{trace}\big[\operatorname{Var}[\boldsymbol{\xi}_1]\big]
% $$
% where the diagonal of $\operatorname{Var}[\boldsymbol{\xi}_1]$ is $\boldsymbol{\sigma^2}$.

\subsection{$\text{trace}[C\Sigma C^\top]$ against $\text{trace}[\Sigma]$}\label{app:2}

In order to understand, consider the case in dimension 2. Consider $x,y\in[0,1]$ so that
$$
\boldsymbol{\xi}_2=C\boldsymbol{\xi}_1\text{with }C=\begin{pmatrix}
x& 1-y\\
1-x & y
\end{pmatrix},\text{ while }
\text{Var}[\boldsymbol{\xi}_1]=\begin{pmatrix}
a^2& rab\\
rab & b^2
\end{pmatrix}
$$
Then, since $\text{Var}[\xi_2']=\displaystyle{\frac{1}{2}}\text{trace}[C\text{Var}[\boldsymbol{\xi}_1]C^\top]$, write $\text{trace}[C\text{Var}[\boldsymbol{\xi}_1]C^\top]$ as

\begin{align*}
    &\text{trace}\left[\begin{pmatrix}
x& 1-y\\
1-x & y
\end{pmatrix}\begin{pmatrix}
a^2& rab\\
rab & b^2
\end{pmatrix}\begin{pmatrix}
x& 1-x\\
1-y & y
\end{pmatrix}\right] \\
&= \text{trace}\left[\begin{pmatrix}
x^2a^2+ 2rab(1-y)x +b^2(1-y)^2 ~~~ *\\
* ~~~ (1-x)^2a^2+ 2rab(1-x)y +b^2y^2 
\end{pmatrix}\right] \\
&=a^2\big(x^2+(1-x)^2\big)+b^2\big(y^2+(1-y)^2\big)\\
&\quad +2rab\big[y(1-x)+x(1-y)\big]
\end{align*}

that we can write as
$$
=\frac{a^2}{2}\big[(2x-1)^2+1]+\frac{b^2}{2}\big[(2y-1)^2+1]+rab
\big[(2x-1)(2y-1)-1\big] 
$$
which is a quadratic form in $\boldsymbol{Z}=(X,Y)=(2x-1,2y-1)$
$$
=\frac{a^2}{2}\big[X^2+1]+\frac{b^2}{2}\big[Y^2+1]+rab
\big[XY-1\big] = \frac{1}{2}\boldsymbol{Z}^\top\begin{pmatrix}
a^2 & rab \\
rab & b^2
\end{pmatrix}\boldsymbol{Z}
$$

which is an elliptic paraboloid function, minimal in $(X,Y)=(0,0)$ -- or $(x,y)=(1/2,1/2)$. Since $(x,y)\in[0,1]^2$, the maximum is either attained in $(0,0)$ or $(1,1)$ when $r<0$ or $(0,1)$ or $(1,0)$ when $r>0$. In the first case, we obtain that the trace is lower than $a^2+b^2$, and in the second case, it can exceed $a^2+b^2$. More specifically, on Figure \ref{parab}, we can see that values of $(x,y)$ such that $\text{trace}[C\text{Var}[\boldsymbol{\xi}]C^\top]> \text{trace}[\text{Var}[\boldsymbol{\xi}]]$ (in red, on the right), that is either when one is close to 1, and the other close to 0.

\begin{figure}
    \centering
    \includegraphics[width=.32\textwidth]{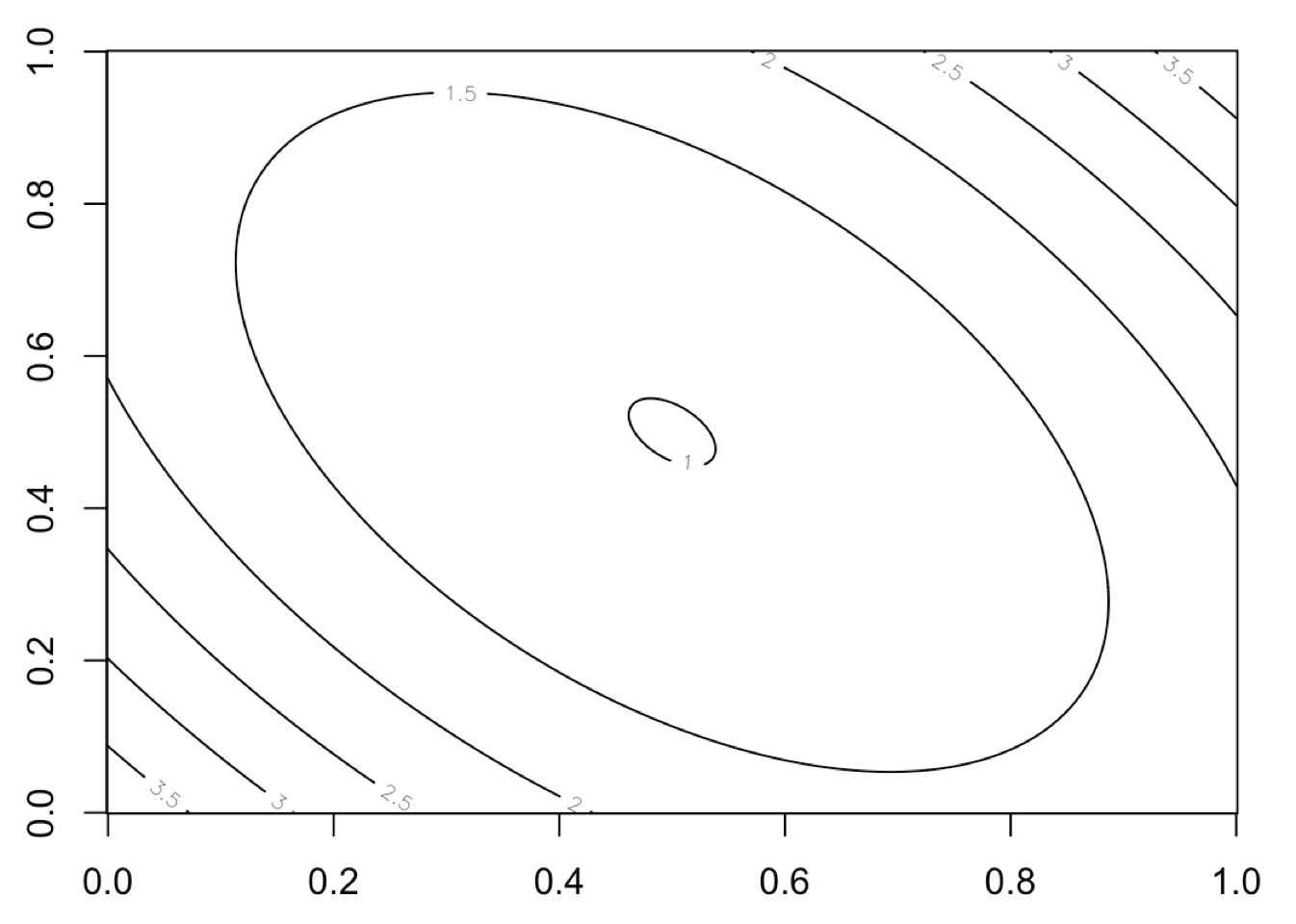}~
    \includegraphics[width=.32\textwidth]{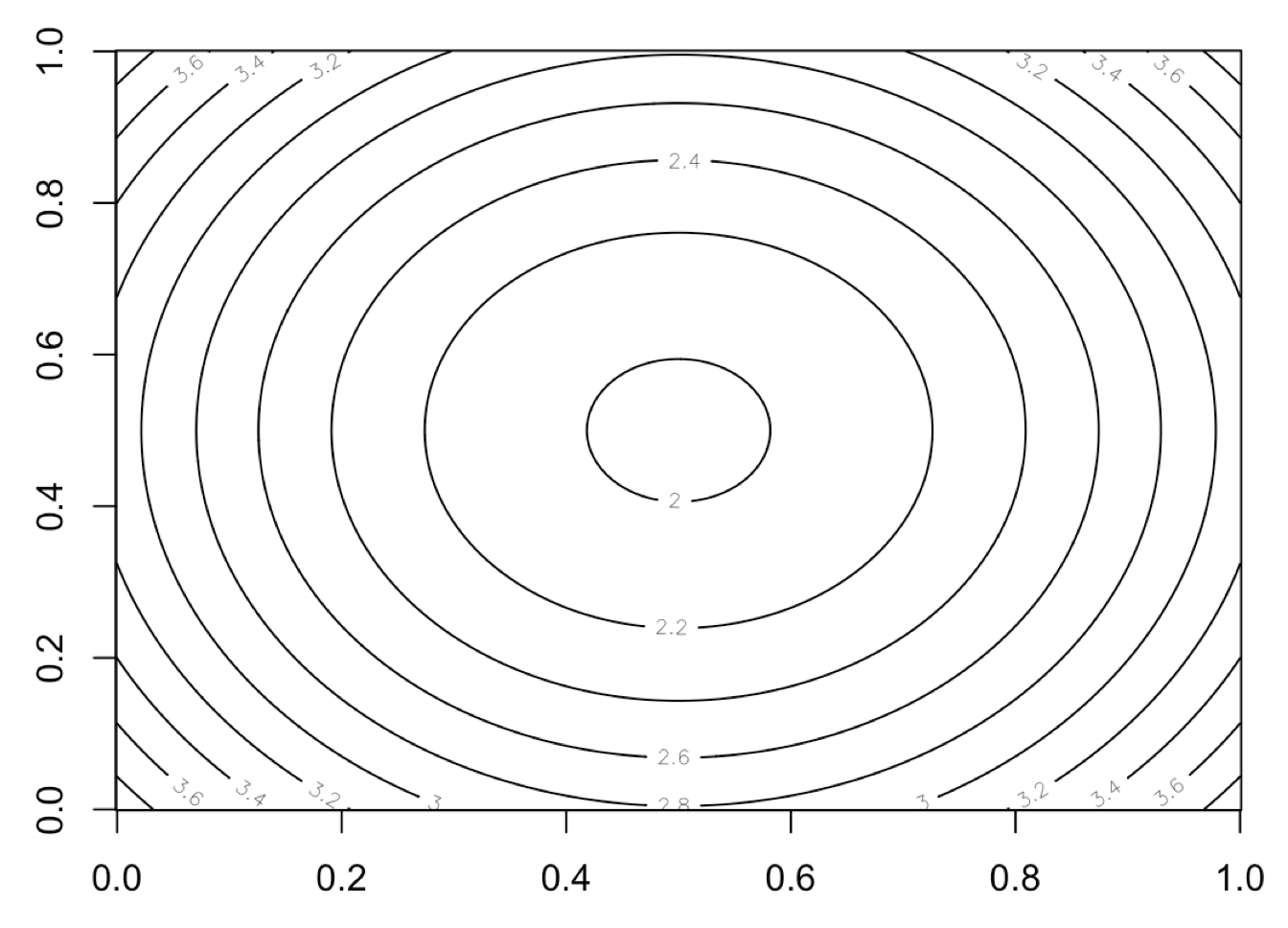}~
    \includegraphics[width=.32\textwidth]{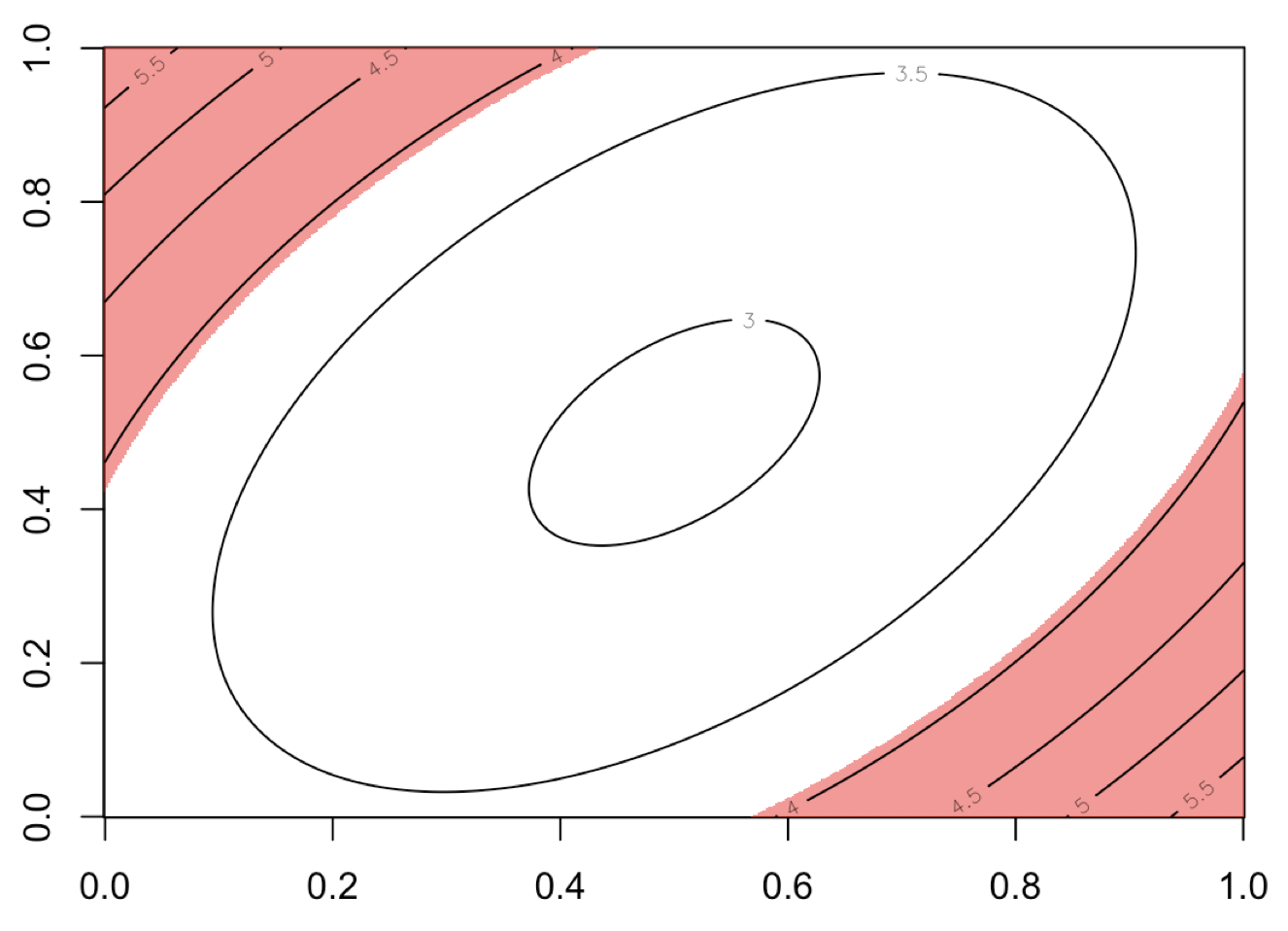}
    \caption{Level curves of $(x,y)\mapsto \text{trace}[C\text{Var}[\boldsymbol{\xi}]C^\top]$ where $C$ is the column-stochastic matrix with diagonal $(x,y)$, for some variance matrix with negative correlation on the left, no correlation in the middle, and positive correlation on the right. For a negative correlation, $\text{trace}[C\text{Var}[\boldsymbol{\xi}]C^\top]\leq \text{trace}[\text{Var}[\boldsymbol{\xi}]]$. For a positive correlation, the red are corresponds to cases where $\text{trace}[C\text{Var}[\boldsymbol{\xi}]C^\top]> \text{trace}[\text{Var}[\boldsymbol{\xi}]]$.}
    \label{parab}
\end{figure}

In that case, for instance with $x=\epsilon$ and $y=1-\epsilon$, it means that

\begin{align*}
    \begin{pmatrix}
\xi_{2,1} \\
\xi_{2,2}
\end{pmatrix} &=
\begin{pmatrix}
x & 1-y \\
1-x & y\\
\end{pmatrix}
\begin{pmatrix}
\xi_{1,1} \\
\xi_{1,2}
\end{pmatrix} \\
&=
\begin{pmatrix}
\epsilon & \epsilon \\
1-\epsilon & 1-\epsilon\\
\end{pmatrix}
\begin{pmatrix}
\xi_{1,1} \\
\xi_{1,2}
\end{pmatrix}\\ 
&=\begin{pmatrix}
\epsilon(\xi_{1,1}+\xi_{1,2})\\
(1-\epsilon)(\xi_{1,1}+\xi_{1,2})\\
\end{pmatrix}
\end{align*}

Even if it is a risk-sharing, it can be seen as an unfair, or unbalanced, one, since policyholder 2 is now taking all most of the risks.

\subsection{Risk sharing on cliques}

Such a design was briefly mentioned in \cite{feng_2020_flood}, and called {\em Hierarchical P2P Risk Sharing} although their model results by partitioning an already complete graph into several complete subgraphs. An important concept here are cliques

\begin{definition}[Clique]
A clique $\mathcal{C}_i$ within an undirected graph $\mathcal{G} = (\mathcal{V}, \mathcal{E})$ is a subset of vertices $\mathcal{C}_i \subseteq \mathcal{V}$ such that every two distinct vertices are adjacent. That is, the induced subgraph of $\mathcal{G}$ by $\mathcal{C}_i$ is complete. A \emph{clique cover} is a partition of the graph $\mathcal{G}$ into a set of cliques. 
\end{definition}\label{def:clique}

In a network, a clique is a subset of nodes such that every two distinct vertices in the clique are adjacent. Assume that a network with $n$ nodes has two (distinct) cliques, with respectively $k$ and $m$ nodes. One can consider some ex-post contributions, to cover for losses claimed by connected policyholders. If risks are homogeneous, contributions are equal, within a clique. Assume that policyholders face random losses $\boldsymbol{X}=(X_1,\cdots,X_k,X_{k+1},\cdots,X_{k+m})$ and consider the following risk sharing
$$
\xi_i = \begin{cases}
\displaystyle{\frac{1}{k}\sum_{j=1}^k X_j}\text{ if }i\in\{1,2,\cdots,k\}\vspace{.2cm}\\
\displaystyle{\frac{1}{m}\sum_{j=k+1}^{k+m} X_j}\text{ if }i\in\{k+1,k+2,\cdots,k+m\}
\end{cases}
$$
This is a linear risk sharing, with sharing matrix $D_{k,m}$, so that $\boldsymbol{\xi}=D_{k,m}\boldsymbol{X}$, defined as
$$
D_{k,m}= \qquad \bordermatrix{~  &  ~1 & ~\cdots & ~k & k+1 & k+2
                        & ~\cdots & k+m  \cr
                     ~~~1&k^{-1} & \cdots & k^{-1} & 0 & 0 & \cdots & 0  \cr
                    ~~~\vdots &  \vdots & & \vdots & \vdots & \vdots &  & \vdots \cr
                  ~~~k&k^{-1} & \cdots & k^{-1} & 0 & 0 & \cdots & 0  \cr
                    k+1 & 0 & \cdots & 0 & m^{-1} & m^{-1} & \cdots& m^{-1} \cr
                    k+2 & 0 & \cdots & 0 & m^{-1} & m^{-1} & \cdots& m^{-1} \cr
                    ~~~\vdots &  \vdots & & \vdots & \vdots & \vdots &  & \vdots \cr k+m & 0 & \cdots & 0 & m^{-1} & m^{-1} & \cdots& m^{-1} \cr
                    }
$$
Since $D_{k,m}$ is a doubly-stochastic matrix, $\boldsymbol{\xi}\prec_{CCX}\boldsymbol{X}$. In order to illustrate Proposition \ref{prop:3:7}, let $I$ denote a uniform variable over $\{1,2,\cdots,n\}$, and $\xi'=\xi_I$,
$$
\mathbb{E}[\xi']=\mathbb{E}[\mathbb{E}[\xi_I|I]]=\frac{1}{n}\sum_{i=1}^n \mathbb{E}[\xi_i]=\frac{1}{n}\sum_{i=1}^n\mathbb{E}[X_i]=\mathbb{E}[X]
$$
as expected since it is a risk sharing, while
\begin{multline*}
\text{Var}[\xi']=\mathbb{E}[\text{Var}[\xi_I|I]]=\frac{1}{n}\sum_{i=1}^n \text{Var}[\xi_i]=\frac{1}{n}\left(k\frac{\text{Var}[X]}{k^2}+m\frac{\text{Var}[X]}{m^2}\right)\\=\frac{\text{Var}[X]}{n}\left(\frac{1}{k}+\frac{1}{m}\right)=\frac{\text{Var}[X]}{k(n-k)}
\end{multline*}
since $n=k+m$, so that $\text{Var}[\xi']<\text{Var}[X]$, meaning that if we randomly pick a policyholder, the variance of the loss while risk sharing is lower than the variance of the loss without risk sharing. Observe further that $k\mapsto k(n-k)$ is maximal when $k=\lfloor n/2 \rfloor $, which means that risk sharing benefit is maximal (socially maximal, for a randomly chosen representative policyholder) when the two cliques have the same size.

{

\subsection{Cliques in practical problems}

Cliques have attractive properties, in that they allow to use a linear risk sharing mechanism on the resulting subgraphs. Further, they can be used to represent a (more) homogeneous subgroup within a larger network, as for example \cite{Feng} propose (in their case, they partition the network based on a hierarchical structure). As we depart from the idea of a complete graph to begin with, partitioning the graph into cliques becomes a \emph{clique problem}. Given a number of cliques $j$, the number of nodes $n$ and the number of members of clique $i$ $n_i$ (where $n_j=n-\sum_{i=1}^{j-1}n_i$), the variance is always minimized as:
$$
\frac{\partial\text{Var}[\xi']}{\partial n_k}=\frac{1}{n_j^2} - \frac{1}{n_k^2} = 0
$$
implying $n_i=\tilde{n} \forall i$ and $\tilde{n}=\frac{n}{j}$. This in turn means that:
$$
\text{Var}[\xi']=\text{Var}[X]\left(\frac{1}{n}\frac{j}{\frac{n}{j}}\right)=\text{Var}[X]\left(\frac{j}{n}\right)^2
$$
So the variance decrease is increasing in $n$ as in standard partitioning problems, but decreasing in $j$ up to the extreme case where $j=n$ (ie. every node is its own clique). Even when abstracting from the fact that the set of all cliques $\mathcal{C}$ in graph $\mathcal{G}$ might not contain ideally sized cliques, the problem of finding a clique cover that minimizes $j$ becomes a \emph{minimal clique problem} and this was shown to be NP-complete by \cite{karp1972reducibility}. Optimizing both clique size and clique number becomes computationally infeasible even with small graphs. 

So unless a very specific network structure (such as a complete graph) is given or a separation into cliques can be translated into a hierarchical structure as in \cite{feng_2020_flood}, working on cliques might present attractive mathematical properties but does not seem practical. Instead, one could consider working with other graph partitioning methods. In the i.i.d. case this might not lead to desirable results, but in the case where there is heterogeneity present in the graph but less so in a densely connected subgroup, there might be some value. For example, if the network considered is of the social kind, densely connected subgroups can represent family or neighbourly ties, which would be interesting  (\cite{bridge2002neighbourhood} or \cite{hanneman2005introduction}) for life-insurance. Subgroup finding on graphs generally works well and a range of fast algorithms exist (e.g. \cite{clauset2004fastgreedy}) and would allow to tune optimizations such as those considered in the linear program and would also allow to depart from the hypothesis of i.i.d. risks. 

}

\subsection{Self-Contribution and ex-post fairness}

Some self contribution is included in most simulations in Section 4 of the main article. Besides the considerations that followed there, this will also result in outcomes that are \emph{perceived} as fair ex-post. We will illustrate the use of the self-contribution layer with an example here. For that consider the three networks depicted in Figure \ref{fig:5_insured_BCE}. Assume there exists a deductible deductible of $100$ and the edges between the nodes are the reciprocal commitments between the nodes. 

\begin{figure}[!ht]
    \centering
    \input{tikz/dessin12.tex}

\vspace{.25cm}

    \caption{Network with four nodes $\{$A,B,C,D$\}$ and three configurations for nodes claiming a loss: nodes \includegraphics[height=.25cm]{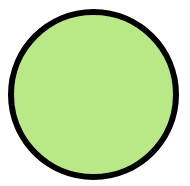} will claim no loss, but will contribute to loss of connections (if any) \includegraphics[height=.25cm]{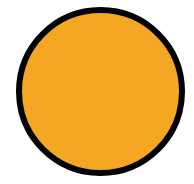} that will claim a loss.}
    \label{fig:5_insured_BCE}
\end{figure}

Here, degrees are $\boldsymbol{d}=(2,2,2,2)$ on the left, and $\boldsymbol{d}=(3,2,2,1)$ for the other two, so that $\overline{d}=2$, and hence we can set the upper bound of the contributions $\gamma$ similarly as in the article to $50$. In Table \ref{tab:ex12}, we have individual losses (denoted $Y$) for two policyholders that are depicted in orange. $X_i$ would be the self contribution for policyholder if there were no risk sharing with friends. $\xi_i$ is the loss of policyholder $i$ in each scenario. Since the sum of $\xi_i$'s equal the sum of $X_i$'s, we consider here some risk sharing, and observe that in all cases, $\text{Var}[\xi_I]\leq \text{Var}[X_I]$. But if we consider the actual payments between the nodes, it becomes clear the ex-post this outcome might not seem fair. If the realized cost is shared pro-rata between the connected nodes, the last line in Table \ref{tab:ex12} depicts the payments. For example, in case a) the nodes that claimed a loss actually do not need to contribute but nodes that do not must reimburse the costs. To avoid such a situation, a first layer of self-contribution can be introduced that will be activate \emph{before} and connected nodes need to contribute anything to the claim.

\begin{table}[]
    \centering
    \begin{tabular}{l|cccc|}
    &\multicolumn{4}{c}{(a)}\\
    
        ~~$i$ & A & B & C & D   \\ \hline
        $d_i$ & 2 & 2 & 2 & 2 \\
        $Z_i$ & 0 & 1 & 0 & 1  \\
        $Y_i$ &  & 200 &  & 60  \\
        $\min\{Y_i,s\}$ & & 100 & & 60\\
        $X_i$ & 0 & 100 & 0 & 60 \\
        $C_i$ & 0 & 50 & 0& 30 \\
        $\xi_i$ & 80 & 0 & 80 & 0 
    \end{tabular}~~\begin{tabular}{|cccc|}
    \multicolumn{4}{c}{(b)}\\
         A & B & C & D   \\ \hline
         3 & 2 & 2 & 1  \\
         0 & 1 & 0 & 1  \\
          & 200 &  & 60  \\
          &100 & & 60\\
         0 & 100 & 0 & 60 \\
         0 & 50 & 0& 50 \\
         100 & 0 & 50 & 10 
    \end{tabular}~~\begin{tabular}{|cccc|}
    \multicolumn{4}{c}{(c)}\\
         A & B & C & D   \\ \hline
         3 & 2 & 2 & 1  \\
         0 & 1 & 0 & 1  \\
         60 & 200 &  &   \\
         60 & 100 & & \\
         60 & 100 & 0 & 0 \\
         20 & 50 & 0& 0 \\
         50 & 20 & 70 & 20 
    \end{tabular}
    
    \vspace{.25cm}
    
    \caption{Scenarios of Figure \ref{fig:5_insured_BCE}, with deductible $s=100$ and (maximal) contribution $\gamma=50$. The average of $x$'s is $\overline{x}=\overline{\xi}=40$, empirical standard deviation of $x$'s  is here proportional to $\sqrt{7200}$, while for $\xi_{(\text{a})}$ it is $\sqrt{6400}$, $\sqrt{6200}$ for $\xi_{(\text{b})}$ and finally $\sqrt{1800}$ for $\xi_{(\text{c})}$.}
    \label{tab:ex12}
\end{table}

% \clearpage
% \bibliography{bibliography}

\end{document}